\begin{document}
\title{Distributed Agreement in Tile Self-Assembly}

\author{Aaron Sterling\thanks{This research was supported in part by National Science Foundation Grants 0652569 and 0728806.}}

\institute{Department of Computer Science, Iowa State University.  \email{sterling@iastate.edu}}

\maketitle
\begin{abstract}
Laboratory investigations have shown that a formal theory of fault-tolerance will be essential to harness nanoscale self-assembly as a medium of computation.  Several researchers have voiced an intuition that self-assembly phenomena are related to the field of distributed computing. This paper formalizes some of that intuition.  We construct tile assembly systems that are able to simulate the solution of the wait-free consensus problem in some distributed systems.  (For potential future work, this may allow binding errors in tile assembly to be analyzed, and managed, with positive results in distributed computing, as a ``blockage'' in our tile assembly model is analogous to a crash failure in a distributed computing model.)  We also define a strengthening of the ``traditional'' consensus problem, to make explicit an expectation about consensus algorithms that is often implicit in distributed computing literature.  We show that solution of this strengthened consensus problem can be simulated by a two-dimensional tile assembly model only for two processes, whereas a three-dimensional tile assembly model can simulate its solution in a distributed system with any number of processes.
\end{abstract}
\section{Introduction}
One emerging field of computer science research is the algorithmic harnessing of molecular self-assembly to produce structures (and perform computations) at the nano scale.  In his Ph.D. thesis in 1998, Winfree~\cite{winfree} used tiles on the integer plane to define a self-assembly model which has become an influential tool.  As noted by several researchers (for example~\cite{report}), problems in algorithmic tile self-assembly share characteristics with better-studied problems in distributed computing: asynchronous computation, the importance of fault tolerance, and the limitations of local knowledge.  In this paper, we formalize a connection between the two fields, by constructing models of tile assembly that simulate solutions to the wait-free consensus problem in some distributed systems.

The tile self-assembly literature has considered two main classes of models: models in which tiles bind to one another in an error-free manner, and models in which there is a positive probability that mismatched tiles will bind to one another.  In an error-permitting model, if mismatched tiles bind, they can produce a \emph{blockage}---an unplanned tile configuration that stops a particular section of an assembly from being able to accrete tiles.

As blockages do occur in wetlab self-assembly experiments, it is natural to ask how we could make our self-assembly computations as resilient against blockages as possible.  Researchers have investigated mechanisms to limit the chance of blockages through error-correction (for example~\cite{chen}~\cite{fuji}), or, relatedly, for a tile assembly to ``heal'' itself in the event of damage~\cite{selfhealing}.  Like other error-correcting codes, these mechanisms can consume significant overhead, and only reduce without eliminating the possibility of a blockage.  Our interest in this paper, therefore, is to build a framework for robust self-assembly even in the presence of one or more unhealable blockages.  Of course, if we consider a situation in which multiple subassemblies grow independently, then the blockage of one subassembly will have no effect on the others.  The problem arises when otherwise independent subassemblies send information to, or receive information from, one another, and need to coordinate based on that information---hence our motivation to import the consensus problem into the world of tile assembly.

The most common types of processor faults modeled in distributed computing are \emph{crash failure} (where a processor stops functioning) and \emph{Byzantine failure} (where a processor can behave maliciously and take ``worst-possible'' steps).  Several other types of failure have been defined; in general, their severity lies between crash failure and Byzantine failure.  We will focus on shared objects that can be simulated in the presence of a tile self-assembly analogue of crash failures, to construct a theoretical foundation for synchronized fault-tolerance in self-assembly.  In the long run, we believe that the combination of error-correction and distributed computing techniques (to manage a variety of failures) will produce self-assembling systems with high fault-tolerance.

The consensus problem was originally defined by Lamport for a system of distributed processors, as an abstraction of the transaction commit problem in database theory.  It has since been shown to have wide application to the study of distributed systems; see, for example, Attiya and Welch~\cite{attiya} for a textbook introduction.  In brief, given a system of $n$ distributed processors, a \emph{solution to the consensus problem} is an algorithm that ensures all nonfaulty processors agree on the same value.  (There is also a ``validity'' condition to ensure the algorithm is not trivial.)  The consensus problem for a system of $n$ processors is called $n$-consensus, and a consensus algorithm is termed \emph{wait-free} if up to $n-1$ processors can crash in an $n$ processor system, and even so all correctly working processors will decide on the same value.

This paper is part of a larger program to connect the fields of self-assembly and distributed computing.  By reducing models of self-assembly to models of distributed processors, one can apply known distributed computing impossibility results to obtain limits to the power of self-assembly~\cite{sterling}~\cite{mult-nuc-journal}.  The results in Section~\ref{section:simulatethree} of this paper make use of a three-dimensional tile assembly model.  Unlike two-dimensional tiles, 3D tiles have not been implemented in the lab, and it is not clear whether they can be, given current methods.  The objective of the current paper is to explore which distributed objects can (and cannot) be simulated by self-assembling systems, in order to clarify how positive results of distributed computing can apply to self-assembly.

We have organized the rest of the paper as follows.  Section~\ref{section:background} provides further background about tile self-assembly and distributed computing.  Section~\ref{section:simulatemessagepassing} presents a tile assembly simulation of a two-processor message-passing distributed system.  In Section~\ref{section:simulatetwo}, we construct a tile assembly simulation of wait-free 2-consensus.  In Section~\ref{section:simulatethree}, we define a strengthening of the consensus problem, and show that two-dimensional tile assembly systems cannot simulate solutions to it for systems of three or more processes, but three-dimensional tile assembly systems can.  Section~\ref{section:conclusion} concludes the paper and provides suggestions for future research.
\section{Background} \label{section:background}
\subsection{Tile self-assembly background}
Winfree's objective in defining the Tile Assembly Model was to provide a useful mathematical abstraction of DNA tiles combining in solution in a random, nondeterministic, asynchronous manner~\cite{winfree}.  Rothemund~\cite{rothemund}, and Rothemund and Winfree~\cite{programsize}, extended the original definition of the model.  For a comprehensive introduction to tile assembly, we refer the reader to~\cite{rothemund}.  Intuitively, we desire a formalism that models the placement of square tiles on the integer plane, one at a time, such that each new tile placed binds to the tiles already there, according to specific rules.  Tiles have four sides (often referred to as north, south, east and west) and exactly one orientation, \emph{i.e.}, they cannot be rotated.

A tile assembly system $\mathcal{T}$ is a 5-tuple $(T,\sigma,\Sigma, \tau, R)$, where $T$ is a finite set of tile types; $\sigma$ is the \emph{seed tile} or \emph{seed assembly}, the ``starting configuration'' for assemblies of $\mathcal{T}$; $\tau:T \times \{N,S,E,W\} \rightarrow \Sigma \times \{0,1,2\}$ is an assignment of symbols (``glue names'') and a ``glue strength'' (0, 1, or 2) to the north, south, east and west sides of each tile; and a symmetric relation $R \subseteq \Sigma \times \Sigma$ that specifies which glues can bind with nonzero strength.  In this model, there are no negative glue strengths, \emph{i.e.}, two tiles cannot repel each other.

In this paper, we allow for the possibility of errors in binding between tiles.  While, in general, binding errors can cause unplanned configurations to be built, we will make a simplifying assumption that the only binding errors that might occur are \emph{tile blockages}, tile mismatches that prevent any further tiles from binding to the subassembly at which the blockage occurred.  In particular, no erroneously bound tile can be enclosed by tiles that attach later in the process of self-assembly.

Figure~\ref{figure:tilebindingexample} shows a toy example of a tile assembly system, and a possible blockage.

A \emph{configuration of $\mathcal{T}$} is a set of tiles, all of which are tile types from $\mathcal{T}$, that have been placed in the plane, and the configuration is \emph{stable} if the binding strength (from $\tau$ and $R$ in $\mathcal{T}$) at every possible cut is at least 2.  An \emph{assembly sequence} is a sequence of single-tile additions to the frontier of the assembly constructed at the previous stage.  Assembly sequences can be finite or infinite in length.  The \emph{result} of assembly sequence $\overrightarrow{\alpha}$ is the union of the tile configurations obtained at every finite stage of $\overrightarrow{\alpha}$.  The \emph{assemblies produced by $\mathcal{T}$} is the set of all stable assemblies that can be built by starting from the seed assembly of $\mathcal{T}$ and legally adding tiles.  If $\alpha$ and $\beta$ are configurations of $\mathcal{T}$, we write $\alpha \longrightarrow \beta$ if there is an assembly sequence that starts at $\alpha$ and produces $\beta$.  An assembly of $\mathcal{T}$ is \emph{terminal} if no tiles can be stably added to it.

We are, of course, interested in being able to \emph{prove} that a certain tile assembly system always achieves a certain output.  In~\cite{solwin}, Soloveichik and Winfree presented a strong technique for this: local determinism.  An assembly sequence $\overrightarrow{\alpha}$ is \emph{locally deterministic} if (1) each tile added in $\overrightarrow{\alpha}$ binds with the minimum strength required for binding; (2) if there is a tile of type $t_0$ at location $l$ in the result of $\alpha$, and $t_0$ and the immediate ``OUT-neighbors'' of $t_0$ are deleted from the result of $\overrightarrow{\alpha}$, then no other tile type in $\mathcal{T}$ can legally bind at $l$; the result of $\overrightarrow{\alpha}$ is terminal.  A tile assembly system $\mathcal{T}$ is locally deterministic iff every legal assembly sequence of $\mathcal{T}$ is locally deterministic.  Local determinism is important because of the following result.
\begin{theorem}[Soloveichik and Winfree~\cite{solwin}]
If $\mathcal{T}$ is locally deterministic, then $\mathcal{T}$ has a unique terminal assembly.
\end{theorem}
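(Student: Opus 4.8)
The plan is to prove the statement directly, by showing that \emph{every} legal assembly sequence of $\mathcal{T}$ has the same result, and that this common result is terminal. Since part~(3) of the definition of local determinism guarantees that the result of each legal assembly sequence is already terminal, and since every terminal assembly of $\mathcal{T}$ is the result of some legal assembly sequence starting from $\sigma$, this suffices. So I would fix two legal assembly sequences $\overrightarrow{\alpha}$ and $\overrightarrow{\beta}$ of $\mathcal{T}$, with results $A$ and $B$, and prove $A \subseteq B$; running the same argument with the roles of $\overrightarrow{\alpha}$ and $\overrightarrow{\beta}$ reversed then gives $B \subseteq A$, hence $A = B$.

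To prove $A \subseteq B$, I would induct along the sequence $\overrightarrow{\alpha}$, maintaining the invariant that every tile placed so far in $\overrightarrow{\alpha}$ already occurs in $B$, at the same location and with the same tile type. The base case is the common seed $\sigma$. For the inductive step, suppose the next step of $\overrightarrow{\alpha}$ places a tile of type $t_0$ at location $\ell$, bonding on its input sides to tiles placed at earlier steps. By the invariant those input-neighbour tiles are present in $B$ with the same types, so (bond strengths being a function of the adjacent tile types and sides) a tile of type $t_0$ at $\ell$ would receive total binding strength at least $\tau = 2$ in $B$. Since $B$ is terminal, $\ell$ is already occupied in $B$; write $t_0' = B(\ell)$, and it remains to show $t_0' = t_0$. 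Here I would invoke part~(2) of local determinism for $\overrightarrow{\beta}$: deleting $t_0'$ together with its immediate OUT-neighbours (with respect to $\overrightarrow{\beta}$) from $B$ leaves an assembly $B'$ in which $t_0'$ is the only tile type that can legally bind at $\ell$. If the input-neighbours of $t_0$-at-$\ell$ from $\overrightarrow{\alpha}$ all survive into $B'$, then a tile of type $t_0$ can legally bind at $\ell$ in $B'$, forcing $t_0 = t_0'$ and closing the induction.

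The main obstacle is precisely that last point: ruling out that some input-neighbour location $\ell''$ of $t_0$-at-$\ell$ in $\overrightarrow{\alpha}$ is simultaneously an immediate OUT-neighbour of $t_0'$-at-$\ell$ in $\overrightarrow{\beta}$, so that $\ell''$ is deleted in forming $B'$. I would isolate this as a small combinatorial lemma about the two placement orders that $\overrightarrow{\alpha}$ and $\overrightarrow{\beta}$ induce on their common set of occupied locations. In $\overrightarrow{\alpha}$ such an $\ell''$ is placed strictly before $\ell$, while as an immediate OUT-neighbour in $\overrightarrow{\beta}$ it is placed strictly after $\ell$ and uses the bond to $t_0'$ in reaching its threshold; combining this with part~(1) (every tile bonds at exactly threshold strength, so every recorded input bond is genuinely needed) and with the invariant already established for all tiles preceding $\ell$ in $\overrightarrow{\alpha}$, I would derive a contradiction with the uniqueness clause~(2) applied at $\ell''$ (or, when the $\ell$--$\ell''$ bond has strength $2$, a circular input-dependency within the single sequence $\overrightarrow{\beta}$). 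I expect this bookkeeping -- tracking which bonds are load-bearing and in which order the two sequences fill locations -- to be the delicate part; everything else is a routine induction.

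Finally, once $A \subseteq B$ and, symmetrically, $B \subseteq A$ are in hand, we get $A = B$. Since $\overrightarrow{\alpha}$ and $\overrightarrow{\beta}$ were arbitrary legal assembly sequences, and since clause~(3) makes the result of any legal assembly sequence of a locally deterministic system terminal while conversely every terminal assembly of $\mathcal{T}$ arises as such a result, $\mathcal{T}$ has exactly one terminal assembly.
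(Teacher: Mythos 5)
This theorem is not proved in the paper at all: it is imported verbatim from Soloveichik and Winfree~\cite{solwin} as a known result, so there is no ``paper route'' to compare yours against; your proposal has to stand on its own as a reconstruction of their argument. Its skeleton is the standard one and much of it is sound: the invariant that every tile placed by $\overrightarrow{\alpha}$ appears in $B$ with the same type, the use of terminality (clause~(3) for $\overrightarrow{\beta}$) to conclude that $\ell$ is occupied in $B$, and the use of clause~(2) for $\overrightarrow{\beta}$ to force $t_0 = t_0'$ once you know $t_0$ can still attach at $\ell$ after deleting $t_0'$ and its OUT-neighbors. (Note also that you are implicitly relying on this paper's strengthened definition, under which \emph{every} legal assembly sequence is locally deterministic; Soloveichik and Winfree only assume the existence of one such sequence, so under their hypothesis your appeals to clauses~(2) and~(3) for the arbitrary sequence $\overrightarrow{\beta}$ would not be available.)

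The genuine gap is exactly the step you flag and then defer: ruling out that an $\overrightarrow{\alpha}$-input-neighbor $\ell''$ of $\ell$ is an OUT-neighbor of $\ell$ in $\overrightarrow{\beta}$ and hence is deleted in forming $B'$. Nothing in clauses~(1)--(3) immediately forbids the ``crossed'' situation in which the $\ell$--$\ell''$ bond serves as an input bond for $\ell$ in $\overrightarrow{\alpha}$ but as an input bond for $\ell''$ in $\overrightarrow{\beta}$; these are dependencies in two different sequences, so the ``circular input-dependency within the single sequence $\overrightarrow{\beta}$'' you invoke in the strength-2 case does not obviously arise, and in the strength-1 case your appeal to clause~(2) ``applied at $\ell''$'' is not an argument but a placeholder -- note in particular that at this point of the induction you do not yet know $t_0'=t_0$, so you cannot even assume the $\ell$--$\ell''$ glue pairing in $B$ is the one you analyzed in $A$. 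Eliminating this crossed case (or showing that $t_0$ remains attachable in $B'$ despite such deletions) is precisely the technical content of the Soloveichik--Winfree proof; as written, your proposal reduces the theorem to this unproved combinatorial claim rather than establishing it, so the proposal is an accurate outline but not yet a proof.
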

\begin{figure}
\centering
\includegraphics[height=3in]{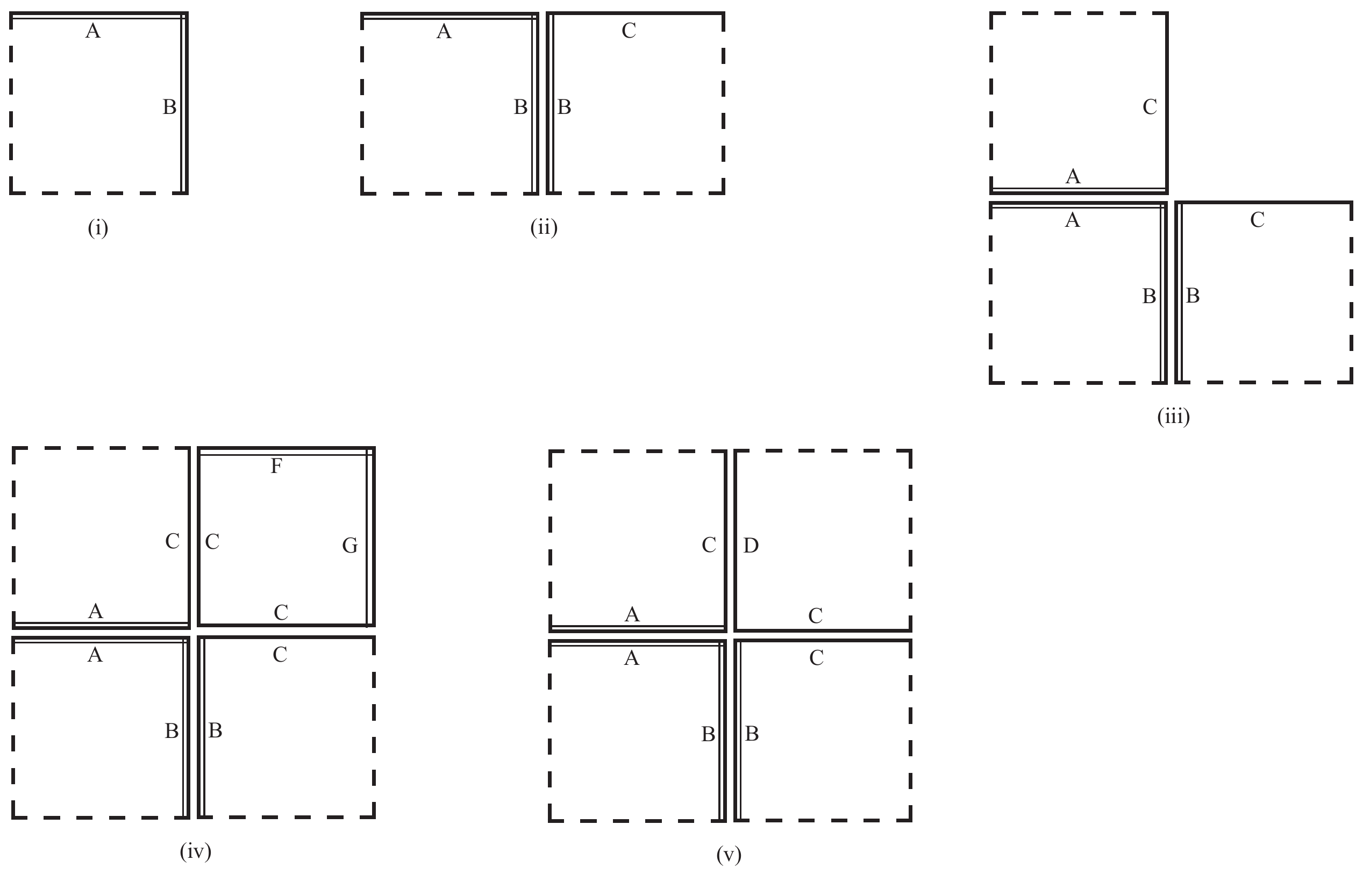}
\caption{A toy example of tile self-assembly.  In (i) a seed tile is placed on a surface.  The dashed sides indicate glue strength zero, and the double-lined sides indicate glue strength two.  In (ii) a tile binds to the east of the seed tile, since the glue strengths and names of the two tiles are complementary.  in (iii) a tile binds to the north of the seed tile for the same reason.  Note that steps (ii) and (iii) could have occurred in reverse order; this is an example of the nondeterminism of self-assembly.  In (iv) a tile binds because its west side (which is strength one, as represented by the single line) and south side (also strength one) bind cooperatively to the tiles already in the configuration, creating a strength two bond.  Figure (v) shows an alternative to (iv), in which a tile binds incorrectly, causing a blockage that prevents the assembly from growing further. }
\label{figure:tilebindingexample}
\end{figure}
\subsection{Distributed computing background}
Distributed computing began as the study of networks of processors, in which each processor had limited local knowledge. However, much of the distributed computing literature now speaks in terms of systems of \emph{processes}, not processors, to emphasize that the algorithms or bounds obtained from the theorem apply to any appropriate collection of discrete processes, whether they are part of the same multicore chip, or spread across a sensor network.

One of the most-studied distributed computing models is the \emph{asynchronous shared memory model}, in which processes are modeled as finite state machines that can read and write to one or more shared memory locations called \emph{registers}.  The model is asynchronous because there is no restriction on when a process will execute its next computation step, except that any nonfaulty process can delay only a finite length of time before taking its next step. We refer the reader to~\cite{attiya} for exposition and technical details of such a model.  We now provide a formal definition of the consensus problem for a distributed system.
\begin{definition}
Let $\mathcal{M}$ be an asynchronous shared memory model such that each process $p_i$ in $\mathcal{M}$ has two special state components: $x_i$, the \emph{input}; and $y_i$, the \emph{decision value}.  Let $V$, the \emph{set of possible decision values}, be a finite set of positive integers.  We require that $x_i \in V$ for all $i$.  For each $i$, $y_i$ starts out containing a null entry, $y_i$ is write-once, and the value written cannot be erased.  A \emph{solution to the consensus problem} is an algorithm that guarantees the following.
\begin{description}
\item[Termination] Every nonfaulty process $p_i$ eventually writes a value to $y_i$.
\item[Agreement] For any $i,j$, if $p_i$ and $p_j$ are nonfaulty and write to $y_i$ and $y_j$, then $y_i=y_j$.  All nonfaulty processes decide on the same decision value.
\item[Validity] For any $i$, if $p_i$ is nonfaulty, then the $y_i$ written by $p_i$ must be the input value $x_j$ for some processor $p_j$.
\end{description}
\end{definition}
We will consider \emph{fault-tolerant consensus}, in which one or more processes can fail in some way.  The simplest type of failure---and the only type we will consider in this paper---is \emph{crash failure}, meaning that at some point a process ceases operation and never takes another step.

It is a classic result of distributed computing that there is no deterministic algorithm that solves the consensus problem in an asynchronous shared memory model, in the presence of even a single crash failure, when the only registers available to the system are read/write registers~\cite{flp}.  One way to overcome that impossibility result is by making the registers more powerful.  These more powerful registers are often called \emph{objects} or \emph{shared objects}, to emphasize they might be a special process segment unto themselves, not just a single memory location.  For a comprehensive introduction to the theory of shared objects with different consensus strengths, see~\cite{hershav}.  We will use only the following key definitions in this paper.
\begin{definition}
Object $O$ \emph{solves wait-free $n$-process consensus} if there exists an asynchronous consensus algorithm for $n$ processes, up to $n-1$ of which might fail (by crashing), using only shared objects of type $O$ and read/write registers.  In distributed system $\mathcal{M}$, $O$ is a \emph{consensus object} if each process in $\mathcal{M}$ can invoke $O$ with a command \texttt{invoke($O$,$v$)}, where $v \in V$ is a possible decision value, and $O$ will ack with command \texttt{return($O$, $v_{out}$)}, where $v_{out} \in V$ is a possible decision value, and $O$ returns the same value $v_{out}$ to all processes that invoke it.  $O$ is an \emph{$n$-consensus object} if $O$ is a consensus object and $n$ is maximal such that $O$ solves wait-free $n$-process consensus.
\end{definition}
Finally, we define the notions of configurations and execution segments of a distributed system.  Intuitively, we consider the events of a system to be the read and write invocations (and acks) performed upon (and returned by) registers by the processors of the system; and configurations and execution segments are built up from the instantaneous state of the system, and events that are then applied to it.
\begin{definition}
Let $\mathcal{M}$ be a distributed system with $n$ processes and one $n$-consensus object $O$.  A \emph{configuration} of $\mathcal{M}$ is an $(n+1)$-tuple $\langle q_1,\ldots,q_n,o \rangle$, where $q_i$ is a state of $p_i$ and $o$ is a state of $O$.  The \emph{events} in $\mathcal{M}$ are the computation steps by the processes, the transmission of a consensus decision value from $O$, and the (possible) crashes of up to $n-1$ of the processes.  A \emph{legal execution segment} of $\mathcal{M}$ is a sequence of form $C_0,\phi_0,C_1,\phi_1,C_2,\phi_2 \ldots$ where each $C_i$ is a configuration, $\phi_i$ is an event, and the application of $\phi_i$ to $C_{i-1}$ results in $C_i$.
\end{definition}
\section{Simulating a two-processor message-passing system in tile self-assembly} \label{section:simulatemessagepassing}
In practice, asynchronous shared memory systems are usually built on top of distributed systems in which processors interact physically by passing messages.  The shared memory model is a convenience for the programmer, as shared memory is easier to reason about than message-passing.  Our method of simulating shared consensus objects is conceptually similar: we will construct tile assembly machinery to simulate a two-processor message-passing distributed system, and then in the next section modify that machinery to simulate a shared 2-consensus object.

Winfree showed how to simulate the behavior of a Turing machine by means of a ``wedge construction'' in tile assembly~\cite{winfree}---essentially, the growing tile assembly grew level by level, with each level simulating one step of a Turing machine execution, the seed tile simulating the initial input, and (if needed) the level expanding one tile longer than the previous level to simulate the Turing machine head moving to a never-before-reached cell of the work tape.  It is this expansion of a level (always occurring in the same direction) that gives the terminal assembly of the simulation a wedge-like appearance.  Lathrop \emph{et al.}~\cite{ccsa} extended this wedge construction so that the notion of ``simulate'' would mean that each row of the wedge records the state of the Turing machine tape after one move of the head, and, if the Turing machine halts, a row of tiles is built along the side of the wedge, so a special ``halting tile'' binds to the base of the wedge, as a marker that the simulation has halted.

We now modify the construction of~\cite{ccsa} to simulate two processors capable of sending each other messages.  The intuition behind the simulation is that we add an input buffer and an output buffer to Winfree's original construction by having every other row of the wedge check to see whether a message has been received, and to incorporate the message if one has arrived; and we simulate the sending and receipt of a constant-size number of messages by using a unique tile that encodes the message to build a ray along the edge of the wedge, toward the intended destination of the message, and a ray back along the edge of the wedge to simulate the \emph{ack}---the character that acknowledges receipt of the message.  A high-level schematic for the construction appears in Figure~\ref{figure:wedge0}.
\begin{figure}
\centering
\includegraphics[height=3in]{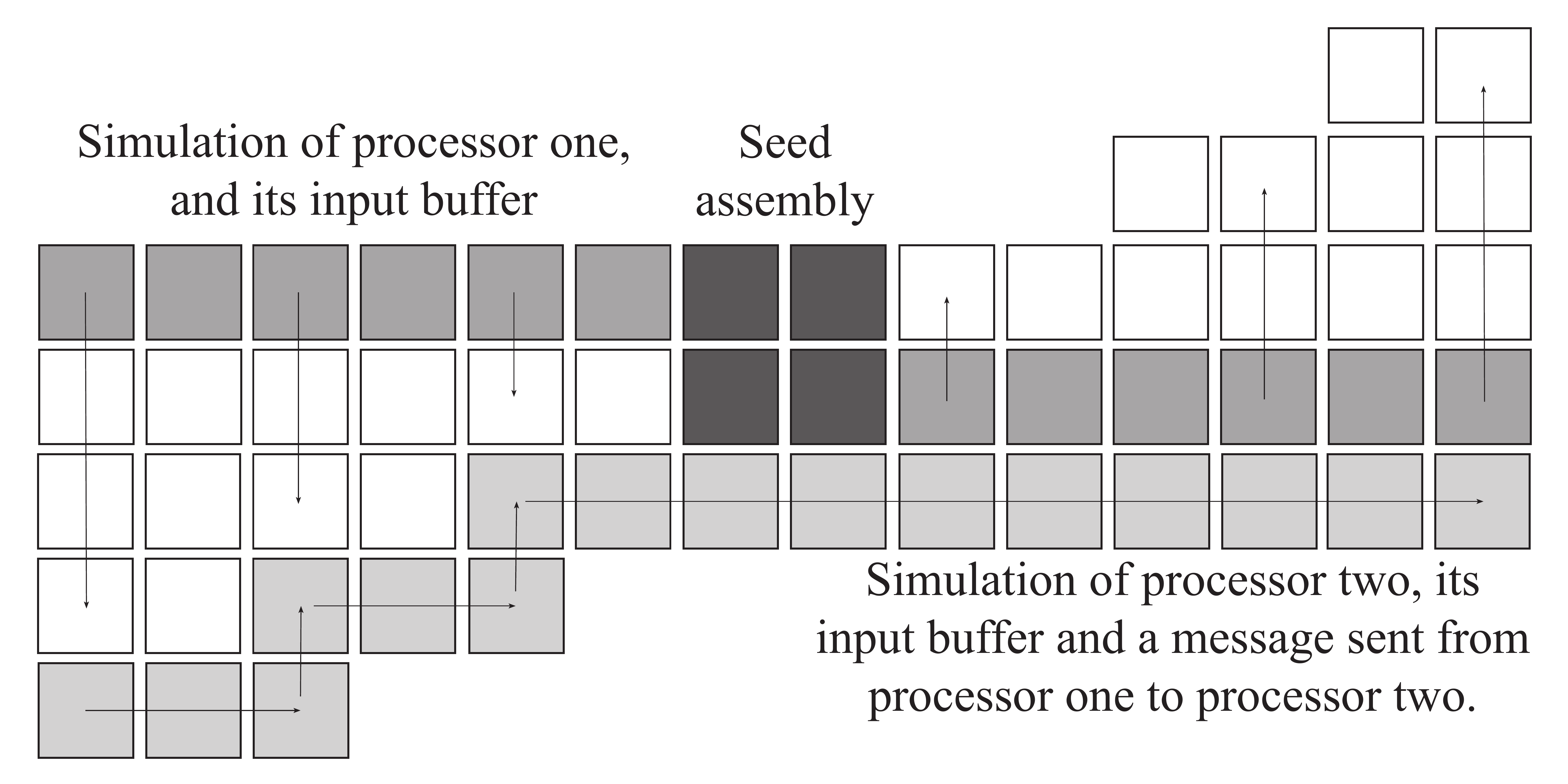}
\caption{Schematic of how to simulate a two-processor message-passing system with two-dimensional tiling.  The green tiles are the initial seed assembly, out of which the white wedges grow.  The wedges simulate the processors.  An additional tile is attached to each level of the wedge, to simulate the processor's inbuffer, as described further in Figure~\ref{figure:wedge}.  To simulate sending a message from processor 1 to processor 2, the wedge on the left sends a message down the side that does not contain the inbuffer.  The transmission of that message is done by the tiles colored light grey.  (The black arrows indicate the order in which the tiles bind during assembly.)  If processor 2 sent a message to processor 1, we could simulate this in tiles by constructing a similar ``grey'' transmission from the northeast edge of the wedge simulating processor 2, westward toward processor 1 (essentially a mirror image of the grey tiles in the schematic).}
\label{figure:wedge0}
\end{figure}
\begin{theorem} \label{theorem:tilesimulation}
For any two-processor message-passing model $\mathcal{M}$ of distributed computing in which each process runs forever and sends messages of constant size, there is a tile assembly system $\mathcal{T}$ (in the standard Winfree-Rothemund Tile Assembly Model) that simulates $\mathcal{M}$ in two dimensions.
\end{theorem}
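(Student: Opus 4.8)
The plan is to obtain $\mathcal{T}$ by augmenting the Turing-machine wedge construction of Lathrop \emph{et al.}~\cite{ccsa} with message-passing machinery. Since each process of $\mathcal{M}$ is at most as powerful as a never-halting Turing machine, I will simulate processor $p_i$ by a wedge $W_i$ that grows from a dedicated anchor tile in the seed $\sigma$, one new row per computation step of $p_i$, exactly as in~\cite{ccsa} but with $p_i$'s transition relation hard-wired into the body tile types. I place the two anchors far enough apart, and orient the two wedges so that their ``staircase'' edges grow outward, leaving between them a semi-infinite vertical channel strip $G$ of fixed width into which all message traffic is routed. Along the inner (gap-facing) edge of each $W_i$ I grow an \emph{inbuffer} column, one tile per row, with the message-incorporation logic activated only on every other row as in Figure~\ref{figure:wedge}; the body tile types are further extended so that on a row where $p_i$ performs a \texttt{send} the inner edge exposes a \emph{launch} glue, and on a row where $p_i$ performs a \texttt{receive} the inbuffer tile exposes a \emph{dock} glue on the side facing $G$.

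Concretely I would proceed in this order. First, fix $\sigma$, the two anchors, and the geometry of $G$. Second, specify the body tile types of $W_1$ and $W_2$, copying~\cite{ccsa} and adding the launch and dock glues; because messages are of constant size there are only finitely many possible message contents $m$, so this remains a finite tile set. Third, for each $m$ design a small family of \emph{ray} tile types that build a one-tile-wide path from a launch site on the sender's edge, through $G$, to a dock glue on the receiver's inbuffer edge, together with a mirror-image \emph{ack-ray} returning to the sender; to keep an inbuffer tile from stalling when no message is in transit, the launch glues actually fire on every eligible row, carrying either a real message or an ``empty'' symbol. Fourth, define the simulation relation: a decoding of any producible assembly to a configuration of $\mathcal{M}$ (read the topmost completed row of each $W_i$ for $q_1,q_2$, and the set of partially built rays for the in-transit messages), and prove by induction on execution length that (a) every legal execution of $\mathcal{M}$ is the decoding of some assembly sequence of $\mathcal{T}$ and (b) every assembly sequence of $\mathcal{T}$ decodes to a legal execution. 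No ``scheduler'' tiles are needed: the nondeterministic interleaving of row additions between $W_1$ and $W_2$, together with the nondeterministic order in which a ray and the receiver's wedge reach a given height, is exactly what realizes the unbounded-but-finite delay of an asynchronous channel, and a blocked assembly sequence decodes to an execution in which a processor crashes.

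The main obstacle is geometric. Because two tiles cannot occupy the same cell, and in this model any unintended adjacency-driven placement is an unmodeled blockage, I must show that the ray channels can be laid out so that over the infinite run no ray ever collides with a wedge body, with an inbuffer column, or with another ray, while still guaranteeing that every launched ray can reach a dock site and that messages are delivered consistently with the channel semantics of $\mathcal{M}$. The natural fix is to ``staircase'' the traffic: successive rays launched from the same edge are offset by a fixed number of rows and dock at correspondingly offset rows, so their paths through $G$ are pairwise disjoint and arbitrarily many may be stacked up the channel at once while the receiver's monotonically growing wedge consumes them one dock site at a time. What has to be checked is that this offsetting is compatible with the every-other-row inbuffer discipline, with the fixed width of $G$, and with the two wedges growing at arbitrary relative rates --- in particular that a ray is never ``trapped'' by a wedge row that has already grown past its intended dock site, and that the dock order matches the launch order. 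Once the channels are disjoint, each component --- a wedge, an inbuffer, a ray --- grows deterministically once the tiles it reads are in place (for the wedges this is the local-determinism argument of~\cite{ccsa}, invoking the theorem of Soloveichik and Winfree~\cite{solwin}), the only branching left in a legal assembly sequence is the order in which independent components grow, and the induction sketched above then yields the claimed two-dimensional simulation.
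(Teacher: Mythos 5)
Your high-level plan matches the paper's: one wedge per processor grown from a common seed (Winfree/Lathrop \emph{et al.}), an inbuffer column along an edge checked on every other row, one unique tile type per message (finitely many since messages have constant size), rays carrying messages and acks between the wedges, asynchrony realized by the nondeterministic interleaving of tile additions, and a blockage decoding to a crash. Where you diverge is in the delivery mechanism, and that is exactly where your argument has a genuine gap. You route each message to a \emph{fixed, pre-offset dock site} through a fixed-width channel and claim the staircased rays make delivery work, but you then list, without resolving, the very conditions on which this hinges: compatibility with the every-other-row inbuffer discipline, arbitrary relative growth rates of the two wedges, the possibility that a ray is ``trapped'' because the receiver's wedge has already grown past (or has not yet reached) its intended dock row, and preservation of launch order at the docks. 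Under full asynchrony there is no bound on how far ahead or behind the receiving wedge can be when a ray crosses the channel, so a scheme with statically chosen dock rows cannot simply be asserted to avoid collisions and guarantee eventual delivery; this is the crux of the theorem, not a routine check to be deferred.

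The paper avoids this problem with a different mechanism: the message tile, once adjacent to the receiver, repeatedly ``crawls up'' the inbuffer edge toward the current frontier, and every other row of the wedge offers a binding opportunity. Since the tile and attachment site at each assembly step are chosen uniformly at random, and it takes many tile additions to complete a wedge row but only one to advance the message by a row, the message almost surely gets unboundedly many chances to be absorbed, hence is delivered with probability one. Note that this makes the paper's notion of delivery probabilistic rather than deterministic; your write-up implicitly promises deterministic delivery and a two-way (a) and (b) decoding correspondence, which your construction as described does not support. To repair your proposal you would either need to prove the unverified geometric/timing claims for the fixed-dock scheme (which seems hard precisely because of the unbounded relative speeds), or replace it with a chasing-the-frontier mechanism and an almost-sure delivery argument as in the paper, adjusting your induction statement accordingly.
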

\begin{proof}
Fix $\mathcal{M}$ a two-processor asynchronous message-passing model of distributed computing, such that processors $p_1$ and $p_2$ send each other messages of size bounded by some constant $k$.  Winfree has shown that the Tile Assembly Model is Turing Universal, and in particular can simulate a Turing Machine using a construction that grows in the shape of a ``wedge''~\cite{winfree}.  If neither $p_1$ nor $p_2$ ever sends (or receives) messages, then let $\mathcal{T}_1$ and $\mathcal{T}_2$ be tile assembly systems that simulate $p_1$ and $p_2$.  Then we can construct a tile assembly system $\mathcal{T}^*$ with a seed out of which $\mathcal{T}_1$ grows in one direction, and $\mathcal{T}_2$ grows in another.

So assume that at least one of $p_1,p_2$ sends messages to the other.  To simulate processors checking their inbuffers, we modify the wedge construction, so that (1) messages sent to the processor simulated by the wedge can ``crawl up'' the side of the wedge until they reach the frontier; and (2) every other row of tiles in the wedge allows sent messages to bind with that row with probability bounded away from zero.  Therefore, the construction remains two-dimensional, and, since both $p_1$ and $p_2$ run forever, the message will be delivered with probability one.  Figures~\ref{figure:wedge} and~\ref{figure:wedge2} illustrate how the modified wedge construction works.

Since there are only $k$-many messages, we encode each message with a unique tile.  We simulate the sending of messages from $p_1$ to $p_2$ by allowing tiles to bind along one ``edge'' of the tile assembly, and simulate the sending of messages from $p_2$ to $p_1$ along the other ``edge.''

It is important that the Tile Assembly Model assumes that at each stage of assembly, the tile that binds---and the location it binds to---are chosen uniformly at random.  It takes several tiles to build one row of a wedge, but just one tile to propagate a message up one row, closer to the frontier.  So, almost surely, the message will have unboundedly many opportunities to incorporate its information into the ``inbuffer row''; hence, almost surely, the message will be delivered.
\end{proof}
\begin{figure}
\centering
\includegraphics[height=6in]{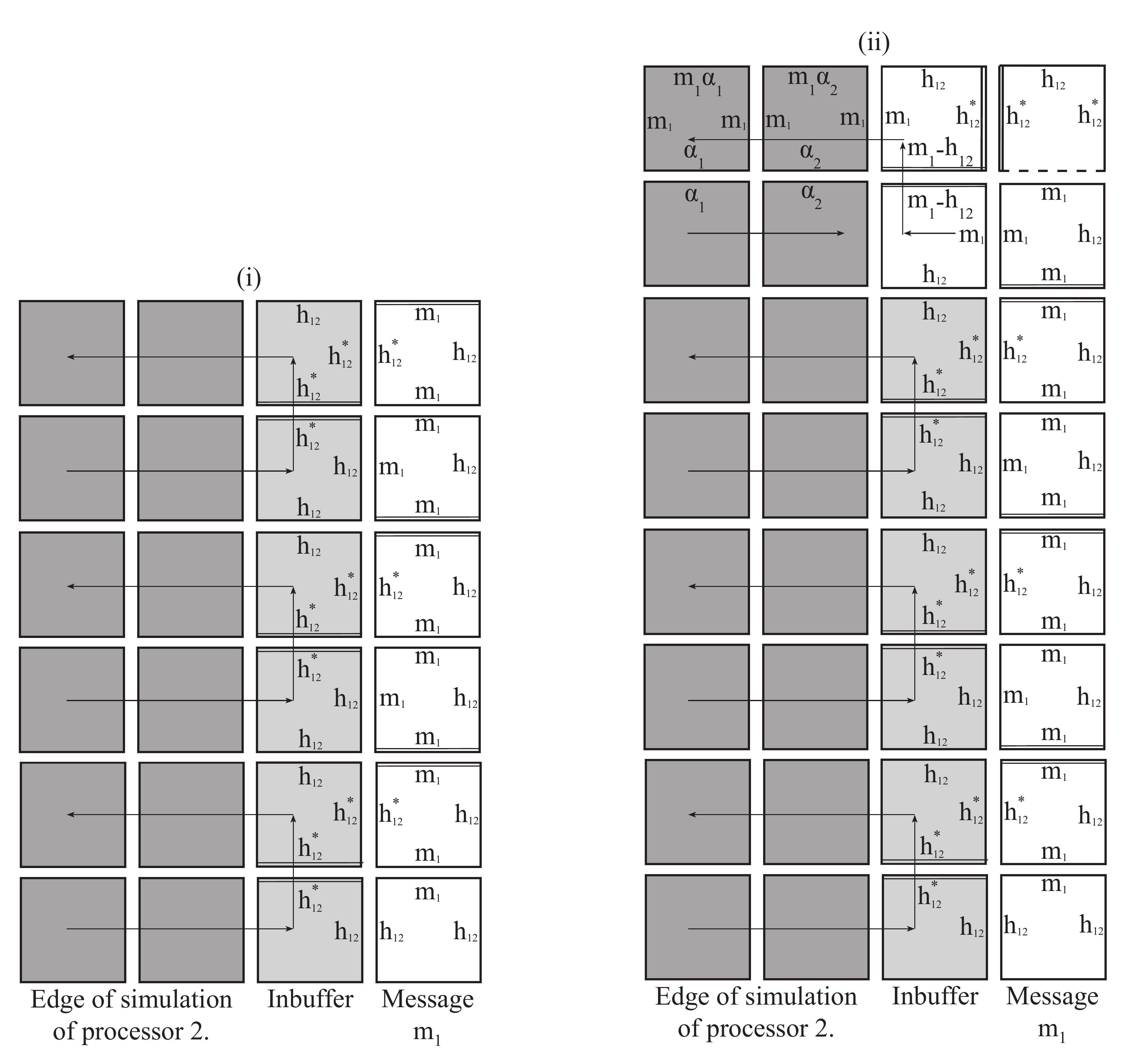}
\caption{This figure illustrates the simulation of a processor, its inbuffer, and the receipt of a message.  The dark grey tiles are the edge of a ``wedge'' that simulates processor 2 (in the simulation of a two-processor system).  The light grey tiles represent the inbuffer of processor 2.  In this diagram, every other row of the wedge is a row that is built from the inbuffer tile, to the west.  The arrows show the direction that tiles bind in.  More generally, the wedge construction does not need to ``check'' the inbuffer at every other row, as long as it checks infinitely often.  Figure (i) shows how message $m_1$ attaches along the east side of the inbuffer, and has opportunities due to double bonds to bind to the north of the most recent inbuffer tile.  In Figure (ii), $m_1$ is successfully transferred into the inbuffer, which then means that the next row that checks the contents of the inbuffer transmits $m_1$ to (potentially) all tiles in the wedge, as shown.  The farthest northeast tile in Figure (ii) changes the column to an inbuffer column, in case later messages need to be transferred from the east to the wedge growing to the west.}
\label{figure:wedge}
\end{figure}
\begin{figure}
\centering
\includegraphics[height=7in]{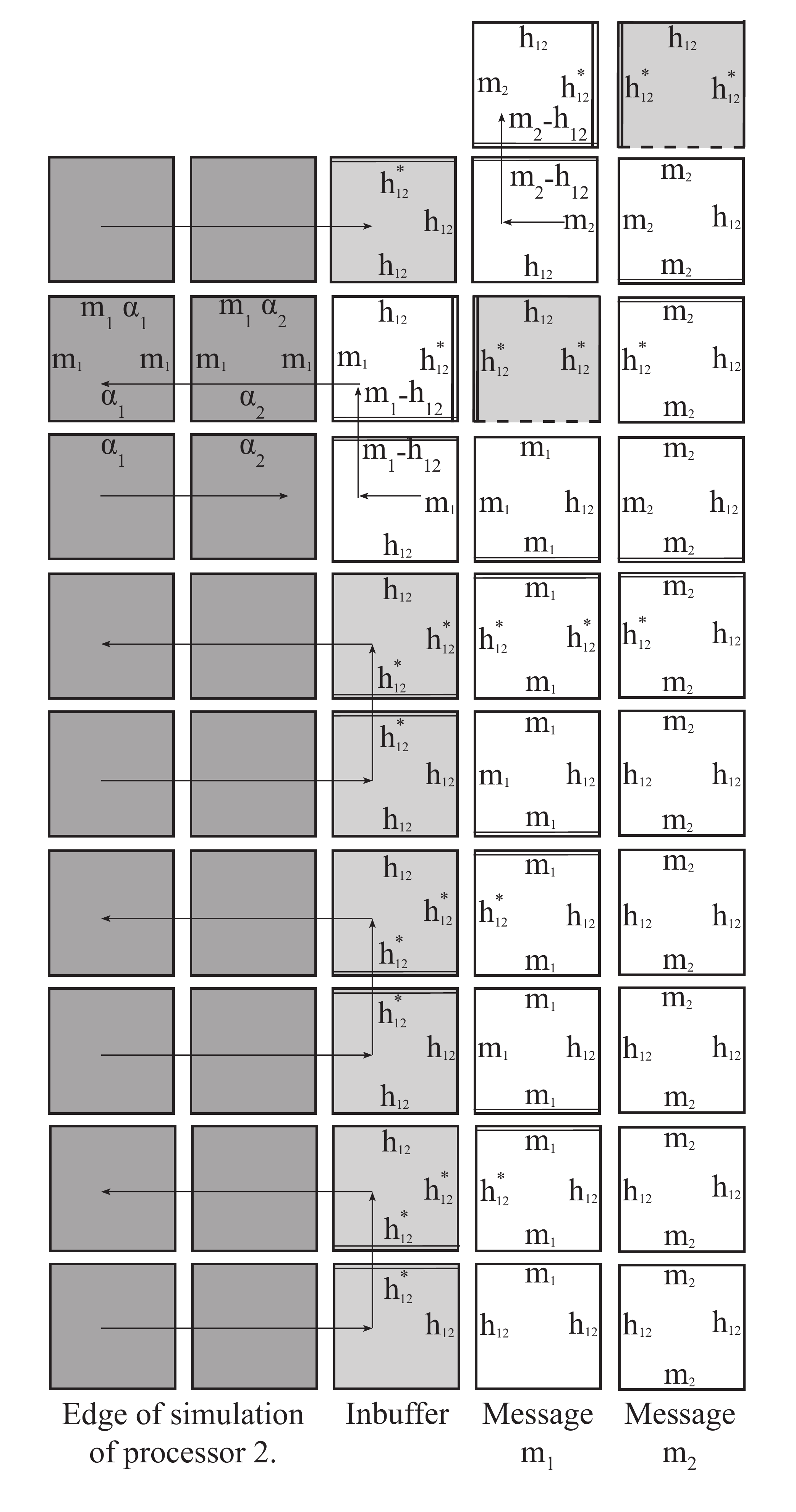}
\caption{This figure continues the construction from Figure~\ref{figure:wedge}(ii), by showing how a second message, named $m_2$, gets transferred to the ``first message'' column of the wedge simulating processor 2.  The mechanism shown in Figure~\ref{figure:wedge} will then transfer $m_2$ to the inbuffer.  Note that the furthest northeast tile sets up the configuration so any further messages sent can be transferred west, as $m_2$ was.}
\label{figure:wedge2}
\end{figure}
\section{Simulating 2-consensus in two dimensions} \label{section:simulatetwo}
We turn now to the simulation in tile assembly of shared objects that solve wait-free consensus problems.  We will discuss in this section how to simulate 2-consensus by a two-dimensional tile assembly system, and in the next section show how to extend the simulation to more processes, and to a third tiling dimension.

Intuitively, the main objective of this section is to construct a tile assembly system that behaves as shown in Figure~\ref{figure:simulation}: there are three ``modules'' or subassemblies: one ($\pi_1$) to simulate $p_1$, another ($\pi_2$) to simulate $p_2$, and a third to simulate a 2-consensus object.  In addition, there are ``rays'' of tiles built from $\pi_1$ and $\pi_2$, to simulate the values that $p_1$ and $p_2$ are writing to the 2-consensus object; and ``rays'' of tiles built in response, to simulate the acks received by $p_1$ and $p_2$ after the writes conclude.

First, we define what it means for a tile assembly system to simulate a distributed system.  Informally, we want to capture the following: each configuration of the distributed system is simulated by a tile configuration; all legal behavior of the distributed system is modeled by legal behavior of the tile assembly system; conversely, if it is impossible to take a step in the distributed system, the tile assembly system cannot simulate taking that step; and the tile assembly system halts, or a section of it becomes blocked, if and only if the distributed system reaches a halting configuration, or the corresponding processor crashes.
\begin{definition}
Tile assembly system $\mathcal{T}$ \emph{simulates} distributed system $\mathcal{M}$ if:
\begin{enumerate}
\item There is a 1-1 mapping $h$ from configurations of $\mathcal{M}$ to stable tile configurations of $\mathcal{T}$.
\item If $C_0,\phi_0,C_1,\phi_1 \cdots C_i,\phi_i$ is a legal execution segment of $\mathcal{M}$, then $h(C_0) \longrightarrow h(C_1) \longrightarrow \cdots \longrightarrow h(C_i)$ is a legal tile assembly sequence in $\mathcal{T}$.
\item If there is no legal execution segment from $C_0$ to $C_1$ in $\mathcal{M}$, then there is no legal tile assembly sequence in $\mathcal{T}$ such that $h(C_0) \longrightarrow h(C_1)$.
\item Let $C$ be a configuration of $\mathcal{M}$ and $\mathcal{C}$ be a set of configurations of $\mathcal{M}$.  If $\mathcal{M}$ is such that, upon achieving configuration $C$, it must eventually achieve some configuration $C' \in \mathcal{C}$ unless a process crashes, then $\mathcal{T}$ is such that, if it ever reaches $h(C)$ then it must achieve $h(C')$ for some $C' \in \mathcal{C}$ unless there is a tile blockage.  (Note that $\mathcal{C}$ may be an infinite set.)
\item If $C_0,\phi_0,C_1,\phi_1$ is a legal execution segment in $\mathcal{M}$, and the event $\phi_0$ is the crash failure of a previously correct process, then $h(C_1)$ contains one more tile blockage binding error than $h(C_0)$ contains.
\item If in configuration $C$ of $\mathcal{M}$ all processes have halted, then in tile configuration $h(C)$ of $\mathcal{T}$ there are no further locations to which tiles can bind stably, \emph{i.e.}, $h(C)$ is terminal.
\end{enumerate}
\end{definition}
\begin{figure}
\centering
\includegraphics[height=3.5in]{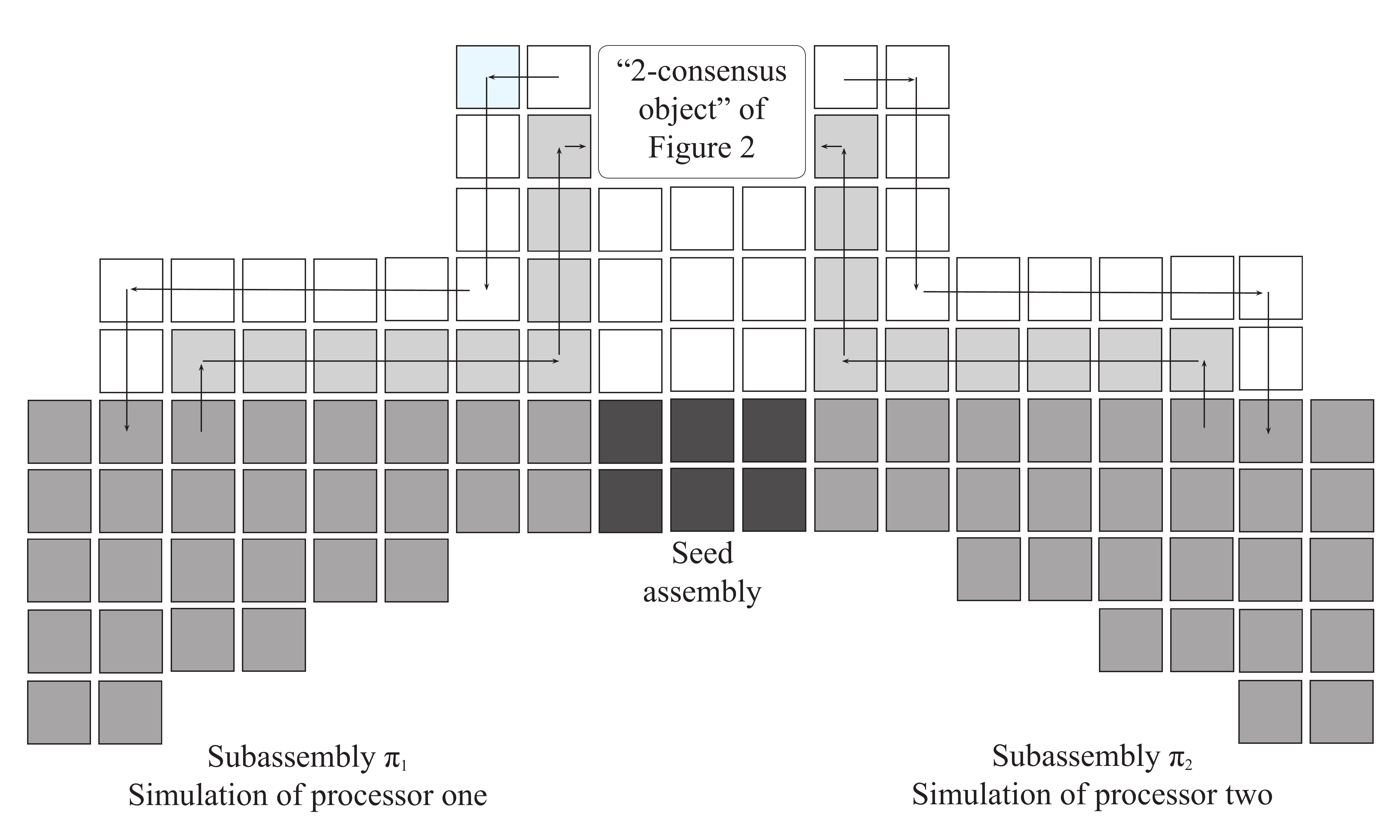}
\caption{Two (yellow) subconfigurations $\pi_1$ and $\pi_2$ growing from a common (green) seed assembly.  (The arrows indicate the order in which tiles bind to the assembly.)  They communicate with the configuration shown in Figure~\ref{figure:testandset}, by means of (grey) ``message'' tiles they send, and (white) ``ack'' tiles they receive back.  Intuitively, $\pi_1$ and $\pi_2$ simulate two processes in a distributed system, communicating with a 2-consensus object.}
\label{figure:simulation}
\end{figure}
The main objective of this section is to prove the following theorem.
\begin{theorem} ~\label{theorem:2Dsimulation}
 Let $\mathcal{M}$ be an asynchronous message passing model of distributed computing with two processes and one 2-consensus object $O$, such that $p_1$ and $p_2$ send no messages to each other, and such that each process invokes $O$ at most once.  Then there is a tile assembly system $\mathcal{T}$ that simulates $\mathcal{M}$.
\end{theorem}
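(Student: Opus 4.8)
The plan is to build three interacting subassemblies over a common seed: wedges $\pi_1$ and $\pi_2$ that simulate $p_1$ and $p_2$ in the manner of Theorem~\ref{theorem:tilesimulation} (simpler here, since by hypothesis the processes exchange no messages, so each wedge merely grows forever recording its process's state), together with a third ``test-and-set'' module $\Pi$ that simulates the 2-consensus object $O$, arranged as in Figures~\ref{figure:simulation} and~\ref{figure:testandset}. An invocation \texttt{invoke($O$,$v_i$)} by $p_i$ is simulated by $\pi_i$ emitting a ray of tiles, headed by a tile that encodes $v_i$, that crawls toward $\Pi$ along the edge of $\pi_i$; the return \texttt{return($O$,$d$)} is simulated by a complementary ``ack'' ray that $\Pi$ grows back toward $\pi_i$, headed by a tile encoding the decided value $d$, which $\pi_i$ then reads into its next ``inbuffer'' row exactly as messages are read in Theorem~\ref{theorem:tilesimulation}. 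Since $V$ is finite and each process invokes $O$ at most once, only boundedly many distinct message and ack tiles are needed, so each can be a unique tile type.

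The heart of the construction is the module $\Pi$, which must behave like a test-and-set cell. I would give $\Pi$ a distinguished \emph{decision site} $\ell$ situated where the potential paths of the two incoming rays meet, with the property that the head tile of either ray --- and only those two tile types --- can bind stably at $\ell$. Whichever ray's head binds there first (an event decided by the uniformly random choice of which tile is added next, and hence capable of going either way, mirroring the asynchrony of $\mathcal{M}$) fixes the decided value: the tile placed at $\ell$ exposes glues that record $d := v_i$, and from that point the continuation of the assembly is forced. Specifically, from $\ell$ the module deterministically grows an ack ray carrying $d$ back toward the winning $\pi_i$, together with a short ``corridor'' that redirects the other ray, when it arrives, into a parallel track whose only stable completion is an ack ray that also carries $d$ back toward the losing $\pi_j$; the losing ray can bind nowhere that would alter $\ell$ or $d$. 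Every part of the assembly other than the single nondeterministic choice at $\ell$ is locally deterministic, so the local determinism theorem of Soloveichik and Winfree applies to each branch, giving a unique terminal assembly once the race is resolved; this yields conditions (2), (3), and (6) of the definition of simulation, as well as the validity and agreement properties of $O$ (the decided value is one of $v_1,v_2$, and both acks carry it).

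Crashes are simulated by blockages. In the error-permitting model a mismatched tile may bind on the growing frontier of $\pi_i$ with positive probability, halting that wedge; I would take such an event to simulate the crash of $p_i$, which immediately gives condition (5). A process that crashes before invoking $O$ simply never sends its ray, and $\Pi$ then decides on the other process's value if that ray arrives (and if neither ray ever arrives, $O$ never returns, which is consistent with $\mathcal{M}$); a process that crashes after receiving its ack is modeled by a blockage of $\pi_i$ occurring after its inbuffer has recorded $d$. To finish, I would define $h$ explicitly: a configuration $\langle q_1, q_2, o\rangle$ of $\mathcal{M}$ is sent to the tile configuration in which $\pi_i$ has grown to the row encoding $q_i$ (including whether the ack has been read), $\Pi$ is in the state encoding $o$ (no ray received / one ray from $p_i$ recorded with value $v_i$ / decided with value $d$ / $d$ acked to one or both sides), and the message and ack rays are drawn to the appropriate lengths; one then checks this is a bijection onto the stable configurations reachable from the seed. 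Conditions (1) and (4) follow --- (4) by the same probabilistic argument as in Theorem~\ref{theorem:tilesimulation}, since a ray advances one tile at a time while each wedge row takes several tiles to build, so almost surely every ray reaches its destination unless a blockage intervenes.

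The main obstacle is the two-dimensional realization of the decision site $\ell$: I must route two rays approaching from opposite edges of the construction so that the first to arrive locks the second out of $\ell$ (genuine geometric mutual exclusion, with no third tile able to sneak in), yet still leaves the second ray a place to bind and a deterministic path to an ack carrying the already-decided value --- all without the two rays' tracks colliding in a way that creates spurious blockages or a second, inconsistent decision. In three dimensions this routing is easy, but in the plane the limited room for the two tracks to pass each other near $\ell$ is exactly the delicate engineering the proof must carry out, and it foreshadows why the strengthened consensus problem for three or more processes will require a third dimension in Section~\ref{section:simulatethree}.
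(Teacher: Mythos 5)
Your proposal follows essentially the same route as the paper: wedge subassemblies for the two processes growing from a common seed, a test-and-set style module with a single decision location at which the first-arriving value tile wins, deterministic ack rays returned to both sides, crashes modeled as blockages, and an explicit case-by-case map $h$ verified by induction. The one piece you flag as an unresolved obstacle---the planar routing at the decision site---is exactly what the paper supplies in Lemma~\ref{lemma:testnset}: an explicit tileset of $6\cdot|V|+5$ tile types (Figures~\ref{figure:testandset} and~\ref{figure:test-and-set-tileset}) in which the losing ray simply continues along the base row toward $l$ and its ack is built on the north sides of its own tiles, so no parallel corridor or delicate passing maneuver is needed, and the configuration is locally deterministic once the decision tile binds.
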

To prove Theorem~\ref{theorem:2Dsimulation} we first exhibit a tile configuration that can simulate a 2-consensus object.
\begin{lemma} \label{lemma:testnset}
Fix $V$ any finite set.  There is a tile configuration $\rho$ that contains a binding location $l$ with the following properties: (1) the only tiles that bind at $l$ have nonzero glue strenghs at either the south, north and west sides, or the south, north and east sides; (2) any tile that binds at $l$ will have a glue name taken from $V$ on either its east or west side; (3) if glue name $v \in V$ is on the tile that binds at $l$, then the name of the tile's north glue will be ``$Ackv$,'' and $\rho$ will build a ray transmitting the glue name $Ackv$ to its west and east.
\end{lemma}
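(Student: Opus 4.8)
The plan is to build $\rho$ around a single ``anchor'' cell that plays the role of the object's one memory location: because only one tile can ever occupy that cell, the first proposed value to reach it is the value recorded, and thereafter the cell is frozen. Concretely, $\rho$ is to grow (deterministically, from its own seed) a vertical column, each tile bonded to the one below with a strength-$2$ glue, whose topmost cell $(c,r-1)$ exposes on its north side a strength-$1$ glue $g_S$ that occurs nowhere else in the ambient system. The binding location of the lemma is $l:=(c,r)$, the cell immediately north of this column. Since $g_S$ has strength only $1$, and a stable configuration needs total binding strength at least $2$ at every cut, no tile can attach at $l$ on the strength of $g_S$ alone: a tile at $l$ must bind cooperatively, combining its south glue with a second strength-$1$ glue on its east or on its west side.

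For each $v\in V$ I would introduce exactly two tile types that can supply that second glue. First, $t_v^{E}$: a strength-$1$ south glue $g_S$, a strength-$1$ east glue whose name is $v$ itself, a strength-$2$ north glue named $Ackv$, and an inert (strength-$0$) west side. Second, its mirror $t_v^{W}$, with nonzero glues only on its south, west and north sides (west glue named $v$, north glue $Ackv$) and an inert east side. The message ray that $p_1$ drives toward $\rho$ from the east is arranged to present, at the cell $(c+1,r)$ it finally occupies, a west-pointing strength-$1$ glue named $v=x_1$; symmetrically $p_2$'s ray presents an east-pointing glue named $v'=x_2$ at $(c-1,r)$. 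Because $g_S$ appears on no other tile's south side and no horizontal glue adjacent to $l$ has strength $2$, the only tiles that can legally attach at $l$ are these $t_v^{E}$ and $t_v^{W}$, and any such tile carries a glue literally equal to some $v\in V$ on its east or west side; this is properties (1) and (2). If only one of the two rays has arrived the value is forced; if both have arrived a nondeterministic choice between $t_{x_1}^{E}$ and $t_{x_2}^{W}$ is made, which is exactly the object's freedom to return either proposed value — and in every case, since $l$ holds one tile, exactly one name $Ackv$ is ever generated.

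For property (3), once $t_v^{E}$ (or $t_v^{W}$) sits at $l$ it exposes the strength-$2$ north glue $Ackv$. Directly above it I would place a single ``splitter'' tile $s_v$ whose south glue is $Ackv$ (strength $2$, enough to attach $s_v$ by itself) and whose east and west sides each bear a strength-$2$ glue that seeds a horizontal ray labelled $Ackv$ — one ray running east above $p_1$'s message ray, the other running west above $p_2$'s message ray, each propagated by a single additional tile type per direction. Since the two ack rays travel along a row strictly north of the message rows, and the anchor column sits strictly south of row $r$, nothing in $\rho$ has to cross anything, so $\rho$ together with its rays stays two-dimensional; this is the configuration sketched in Figure~\ref{figure:testandset}.

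I do not expect a genuine obstacle here: the only real work is bookkeeping, namely checking that $g_S$, the value glues $v$, the names $Ackv$, and the ray glues are each used nowhere else, so that (a) nothing attaches at $l$ except the intended $t_v^{E}/t_v^{W}$, (b) neither ack ray can seed spurious growth, and (c) the incoming message ray is physically halted at column $c\pm1$ — which it is, since $t_v^{E}$ has an inert west side and $t_v^{W}$ an inert east side, leaving the ray nothing to extend into. The one point worth flagging is geometric rather than combinatorial: with exactly two incoming rays, ``message in, ack out'' can be routed on two distinct rows on each side without a crossing, so $2$D suffices; a third process would force a crossing in the plane, which is precisely the difficulty that Section~\ref{section:simulatethree} must address.
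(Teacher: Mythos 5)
Your construction is correct and is essentially the paper's own argument: a cooperative decision site $l$ that can only be filled by a tile combining a strength-1 glue from the base below with a strength-1 value glue (a name from $V$) arriving on its east or west side, whose north glue is then ``$Ackv$,'' after which ack rays carrying $Ackv$ are propagated back to the east and west; the paper simply makes this concrete with an explicit tileset of $6\cdot|V|+5$ types and appeals to local determinism after the decision tile binds to guarantee the acks appear regardless of binding order. The only (minor) design difference is that the paper builds the ack toward the later-arriving side on the north faces of that side's incoming transmission tiles as they arrive, whereas your ack ray grows autonomously on strength-2 glues, but either routing satisfies the lemma as stated.
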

\begin{figure}
\centering
\includegraphics[height=5.5in]{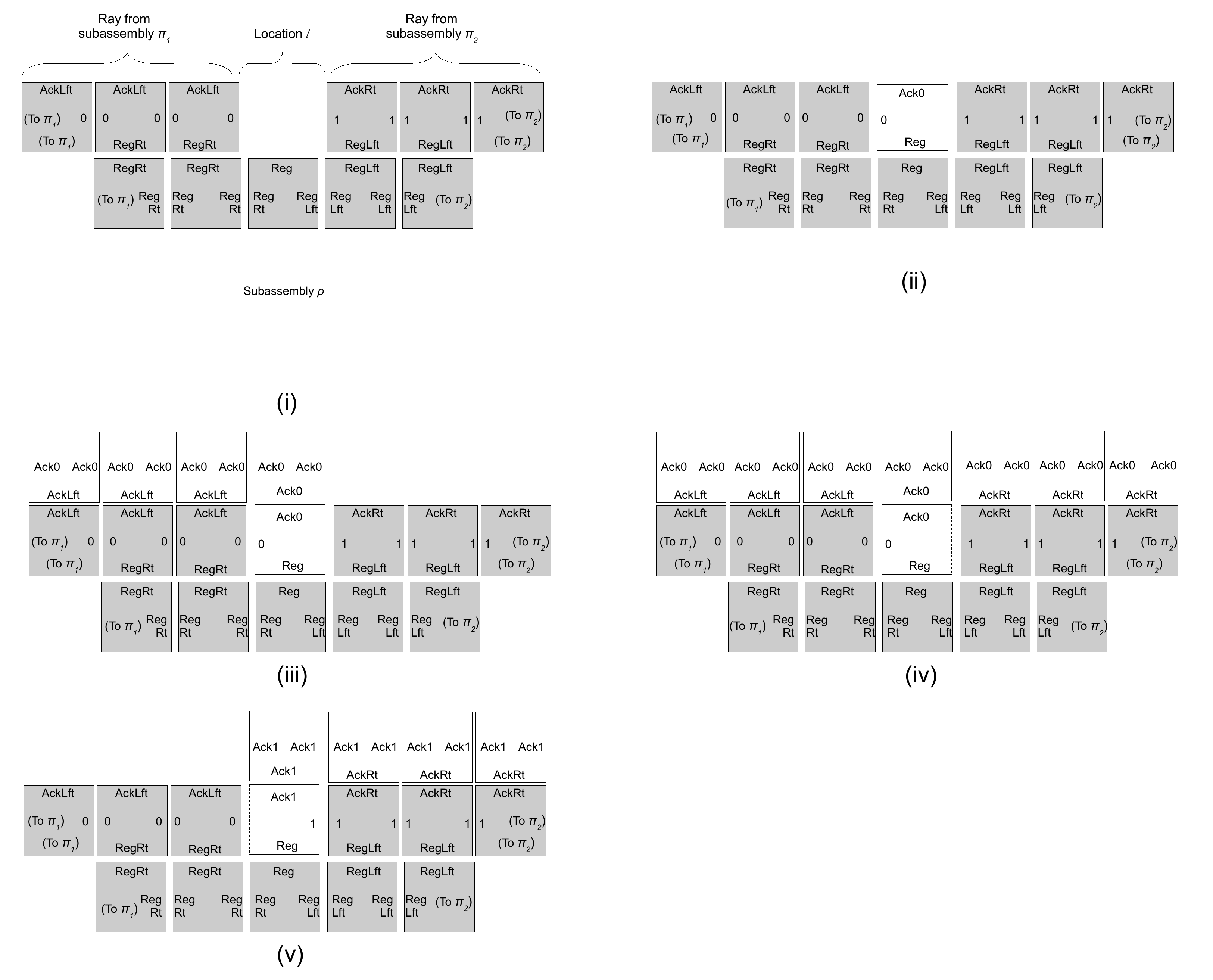}
\caption{Diagram of the simulator of a 2-consensus object used to prove Theorem~\ref{theorem:2Dsimulation}.  At stage (i), rays from subassemblies $\pi_1$ and $\pi_2$ approach location $l$.  At stage (ii), a tile binds at $l$, in this case deciding in favor of the input value of subassembly $\pi_1$.  At stage (iii), the simulator sends an ack to $\pi_1$.  At stage (iv), the simulator sends an ack to $\pi_2$.  ``Stage'' (v) demonstrates the alternative: a decision tile has bound in favor of the initial value of $\pi_2$, and the simulator has acked to $\pi_2$.}
\label{figure:testandset}
\end{figure}
Let $V$ be the set of decision values of the consensus problem we want to simulate.  Then Figure~\ref{figure:test-and-set-tileset} exhibits a tileset of $6 \cdot |V|+5$ tile types that assembles $\rho$ with the desired properties.
\begin{proof}[Lemma~\ref{lemma:testnset}]
Fix a finite set $V$ and a locally deterministic tile assembly system $\mathcal{T}$ that builds all but the top row of the tile configuration shown in Figure~\ref{figure:testandset}(i).  For each $v \in V$, add a tile as shown in the top row of Figure~\ref{figure:testandset}(i), such that $v$ appears in each location a ``0'' or ``1'' appears in the figure.  If a tile with value $v \in V$ as its west glue name binds as a decision tile, as in Figure~\ref{figure:testandset}(ii), then $\rho$ builds an ack as shown in Figure~\ref{figure:testandset}(iii).  Similarly, a tile with value $v' \in V$ as its east glue, $\rho$ builds an ack as shown in Figure~\ref{figure:testandset}(v).  Without loss of generality, assume the decision tile binds with the value transmitted from the west of $\rho$.  When tiles eventually arrive from the east, $\rho$ will build an ack on their north sides, as shown in~\ref{figure:testandset}(iv).  Figure~\ref{figure:test-and-set-tileset} exhibits an explicit tileset of $6 \cdot |V| +5$ tiles for such a tile configuration $\rho$.

Once the decision tile binds, the tileset for $\rho$ is locally deterministic.  Hence, unless a blockage error occurs, $\rho$ builds appropriate acks to both east and west, regardless of the order in which the tiles bind.
\end{proof}
\begin{figure}
\centering
\includegraphics[height=5.5in]{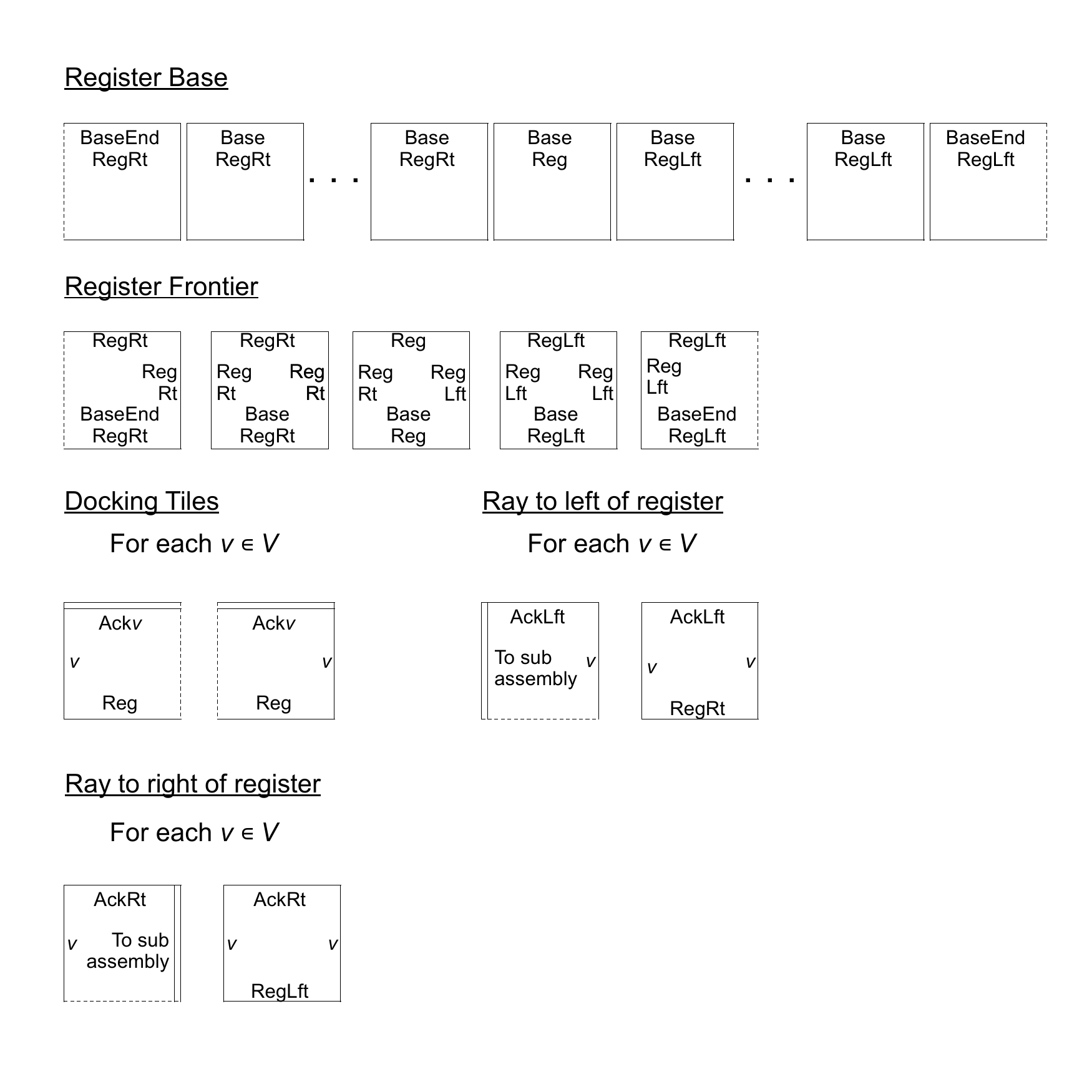}
\caption{A tileset of $6 \cdot |V|+5$ tile types that assembles $\rho$ that will behave according to Figure~\ref{figure:testandset}.  It is important for the proof that this tileset is locally deterministic, except for the choice of tile that binds at location $l$, the decision point.} \label{figure:test-and-set-tileset}
\end{figure}
\begin{proof}[Theorem~\ref{theorem:2Dsimulation}]
Fix $\mathcal{M}$, a system of distributed computing per the theorem statement, let $V$ be the set of possible input values to the consensus problem that $O$ might be invoked to solve, and let $k=|V|$.  We will design $\mathcal{T}$ so it builds the structure shown in Figure~\ref{figure:simulation}.

Using the machinery from the proof of Theorem~\ref{theorem:tilesimulation}, let $\mathcal{T}_1$ be a tile assembly system that simulates $p_1$, $\mathcal{T}_2$ a tile assembly system that simulates $p_2$, and $\mathcal{T}_3$ a tile assembly system that constructs $\rho$, a simulator of a 2-consensus object, per Lemma~\ref{lemma:testnset}.  Without loss of generality, we can require $\mathcal{T}_1$ to build in a direction opposite to $\mathcal{T}_2$ (west and east), and for each to build in such a way that they simulate the growth of the Turing machine tape by extending one additional tile to the south at every other column of the construction.

Begin with tile assembly system $\mathcal{T}^*$ as in the proof of Theorem 2: the seed assembly of $\mathcal{T}^*$ is such that $\mathcal{T}_1$ grows out of it in one direction, and $\mathcal{T}_2$ grows out of it in another, as in Figure 2.  We will build $\mathcal{T}$ by extending $\mathcal{T}^*$ to include the simulation of communication between those two processes and the consensus object.

Extend $\mathcal{T}^*$ to $\mathcal{T}$ by adding, for each of the $k$ possible decision values, unique tiles of the form shown in Figure~\ref{figure:testandset}(i), so $\pi_1$ and $\pi_2$ can send rays to $\rho$, and receive acks as shown in Figure~\ref{figure:testandset}(iv).  (The diagram shows a simulation of binary consensus, with only glue names ``0'' and ``1.''  We replace those glue names with a total of $k$ distinct names.)  We also include in $\mathcal{T}$ the tiles needed so that a subconfiguration that simulates a process can send, and receive, one tile's worth of information.
To show $\mathcal{T}$ simulates $\mathcal{M}$, let $h_1$ be a map that testifies that $\mathcal{T}_1$ simulates $p_1$ (on all process steps except invocations to $O$ and responses from $O$), and $h_2$ testify that $\mathcal{T}_2$ simulates $p_2$ (similarly, without communication with $O$).  Let $\alpha$ be a configuration of $\rho$ in which the base of $\rho$ has been completely built, \emph{i.e.}, everything but the top row of Figure~\ref{figure:testandset}(i).  Then for $C=\langle q_1,q_2,o \rangle$ a configuration of $\mathcal{M}$, we have the following cases.
\begin{enumerate}
\item If neither $p_1$ nor $p_2$ has yet invoked $O$, let $h(C)=h_1(q_1) \cup h_2(q_2) \cup \alpha$.
\item If $p_1$ or $p_2$ (or both) have invoked $O$, but $O$ has not responded to either yet, let $h(C)=h_1(q_1) \cup h_2(q_2) \cup \alpha \cup \beta$, where $\beta$ is the blue tile configuration of Figure~\ref{figure:simulation} that simulates the transmission of messages from $p_1$ or $p_2$ or both, as appropriate.  As the messages have not arrived at $O$, no tiles in $\beta$ reach the top of subassembly $\rho$.
\item If $O$ has responded to exactly one process, say $p_1$ WLOG, let $h(C)=h_1(q_1) \cup h_2(q_2) \cup \alpha \cup \beta$, where $\beta$ is a configuration as in Figure~\ref{figure:testandset}(iii), with the appropriate decision value replaced for glue name ``0.''
\item If $O$ has responded to both processes, say in order $p_1$ and then $p_2$ WLOG, let $h(C)=h_1(q_1) \cup h_2(q_2) \cup \alpha \cup \beta$, where $\beta$ is a configuration as in Figure~\ref{figure:testandset}(iv), with the appropriate decision value replaced for glue names ``0'' and ``1.''
\item If $q_1$ is of form \texttt{return($O$,$v$)} for $v \in V$, let $C'$ be the configuration immediately preceding $C$ in the execution segment.  Define $h(C)$ by adding tiles to $h(C')$ so a tile with the glue name corresponding to decision value $v$ is stably placed at the northeast corner of $\pi_1$, for incorporation into the simulation of $p_1$.  Define $h(C)$ similarly if $q_2$ is of form \texttt{return($O$,$v$)}, except that the tile is stably placed at the northwest corner of $\pi_2$.
\end{enumerate}
Since we are assuming each of $p_1$ and $p_2$ invokes $O$ at most once, a simple induction argument confirms that $\mathcal{T}$ simulates $\mathcal{M}$ via $h$.
\end{proof}
\section{Simulating distributed systems with three or more processes} \label{section:simulatethree}
One critical difference between ``classical'' distributed systems and tile assembly systems is that sending messages---and writing to shared memory locations---in a distributed system does not affect the future computation resources available to processes; whereas in a tile assembly system, the tiles placed on the plane to simulate such operations may ``box in'' other subassemblies, so they cannot grow beyond some point, due to tile blockages.  Put another way, systems of distributed processes have multidimensional resources: each process computes using its own set of resources, and message-passing takes place via a different set of resources.  By contrast, every tile operation self-assembles using the same shared resource: the surface.  It is, therefore, not surprising that to simulate this resource-independence of distributed systems, tile assembly systems require multiple surfaces, that is to say, three spatial dimensions.  Formalizing that is the main objective of this section.

We now generalize the ``traditional'' consensus problem in distributed computing, to make explicit an assumption that is perhaps universal in the distributed computing literature: invocation of a consensus object does not increase the likelihood that a processor will fail.
\begin{definition}
Let $\mathcal{M}$ be a system of distributed processes such that all correctly-operating processes will run forever.  A solution to the \emph{Consensus Subroutine Problem} for $\mathcal{M}$ is an algorithm $A$ that solves the wait-free consensus problem for $\mathcal{M}$ in such a way that no process increases its likelihood of crash failure by calling $A$ as a subroutine.
\end{definition}
If $\mathcal{M}$ has only two processes, then a tile assembly system of the form in Figure~\ref{figure:simulation} will solve the Consensus Subroutine Problem for $\mathcal{M}$, because there is enough space on the surface for $\pi_1$, $\pi_2$ and the messages to and from $\rho$ to assemble without interfering.  For $\mathcal{M}$ with three or more processes, the situation is different.  If we attempt to simulate processes that run forever in two dimensions, the tile configurations that simulate the processes must grow without bound.  This will cause destructive collision with tiles attempting to communicate information with an $n$-consensus object.
\begin{theorem} \label{theorem:impossibility}
Let $\mathcal{M}$ be a system of $n$ distributed processes ($n \geq 3$) and one $n$-consensus object, such that all correctly-working processes will run forever.  Then no two-dimensional tile assembly system can simulate a solution of the Consensus Subroutine Problem for $\mathcal{M}$.
\end{theorem}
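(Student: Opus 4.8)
The plan is to argue by contradiction. Suppose some two-dimensional tile assembly system $\mathcal{T}$ simulates, via a map $h$, a solution $A$ to the Consensus Subroutine Problem for $\mathcal{M}$. First I would establish an ``unbounded arms'' picture. Fix a crash-free, fairly scheduled execution $E$ of $\mathcal{M}$, so every process $p_i$ takes infinitely many steps. By condition~2 of the simulation definition, $h$ sends $E$ to an infinite legal assembly sequence of $\mathcal{T}$, and by condition~1 ($h$ injective) the tile configurations strictly increase along it, so its result $\Pi$ is an infinite stable assembly. Using the locality of tile binding, together with the fact that along $E$ the state of each $p_i$ changes infinitely often while the states of the other processes and of $O$ can momentarily be held fixed, I would extract from $\Pi$, for each $i$, a connected subassembly $\pi_i$ consisting of the tiles added to simulate the steps of $p_i$; show that the $\pi_i$ are pairwise disjoint except near the bounded seed, that each is connected to the seed $\sigma$, and that each is infinite, hence unbounded in $\mathbb{Z}^2$. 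I would likewise identify the bounded subassembly $\rho$ simulating the object $O$ (its role in a single consensus instance is finite). Intuitively, $\pi_1,\dots,\pi_n$ are $n\ge 3$ disjoint unbounded ``arms'' emanating from a bounded core that also contains $\rho$, while the rays simulating invocations of, and responses from, $O$ are tile paths joining the arms to $\rho$.

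The heart of the argument is a planarity obstruction. Each $\pi_i$, being connected and unbounded, contains a simple path to infinity; joining the basepoints of two such paths through the connected core produces a proper bi-infinite simple arc which separates the plane into two unbounded pieces (a bi-infinite analogue of the Jordan curve theorem). Applying this to $\pi_1$ and $\pi_2$ places every further arm, in particular $\pi_3$, on one side; and because $n\ge 3$ arms emanate asymptotically like rays from the bounded core, the bounded gadget $\rho$ can border the region lying ``between'' at most two adjacent arms, so some arm $\pi_i$ is walled off from $\rho$ by the remaining arms. I would then exploit the asynchrony of $\mathcal{M}$: the scheduler may run the other processes for arbitrarily long, growing their arms arbitrarily large, before $p_i$ invokes $O$. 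By conditions~2 and~4, $\mathcal{T}$ must then build a message ray from $\pi_i$ to $\rho$ and an ack ray back; and since $\mathcal{M}$ has no crashes, wait-freedom forces $\mathcal{M}$ to reach some configuration $C'$ in which $O$ has responded to $p_i$, so by condition~4, $\mathcal{T}$ must reach $h(C')$ unless a tile blockage occurs. But these rays must cross territory occupied by an intervening arm $\pi_j$. Either they cannot be completed, so $h(C')$ is never reached with no blockage to excuse it, violating condition~4; or completing them --- or, symmetrically, the continued growth of $\pi_j$ past them --- forces an erroneous tile to bind, i.e., a blockage, at a site that would have seen no error had $p_i$ not invoked $O$. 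By condition~5 that blockage is a crash \emph{caused} by the invocation of $O$, contradicting the defining requirement of the Consensus Subroutine Problem that calling $A$ does not increase any process's likelihood of crash failure. Either way we reach a contradiction. I would close by noting that the obstruction truly requires $n\ge 3$: for $n=2$ the two arms split the plane into two pieces each bordering the core, which is precisely why the construction of Figure~\ref{figure:simulation} succeeds.

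The step I expect to be the main obstacle is making the planarity argument robust against an adversarial $\mathcal{T}$. A cunning tile system need not grow its arms as straight rays; it could route them along curved, pre-reserved corridors, so I must show that however $\mathcal{T}$ lays out the arms, the conjunction of ``all $n\ge 3$ arms grow unboundedly'' and ``each arm retains at every time an obstacle-free route to the single bounded gadget $\rho$'' is impossible in the plane --- for instance by tracking the cyclic order in which the arms cross a large circle as they grow and showing $\rho$ is eventually enclosed away from one of them, or by tracking the complementary regions of the union of the arms and the accumulated message rays. Formulating the right monotone invariant, and connecting a failure of it cleanly to conditions~4 and~5 (distinguishing the ``cannot complete the rays'' case from the ``erroneous binding'' case), is the delicate part; the remainder is bookkeeping with the simulation definition.
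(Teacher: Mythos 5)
Your overall strategy does match the paper's at a high level: use an adversarial asynchronous schedule to force a situation where the rays simulating one process's interaction with the consensus object must permanently cut off another arm, conclude via the simulation definition that invoking the object increased a process's chance of (simulated) crash failure, contradicting the Consensus Subroutine Problem, and reduce $n\geq 4$ to $n=3$ by crashing all but three processes. However, the step you yourself flag as the main obstacle --- the planarity claim that some arm $\pi_i$ is walled off from the bounded gadget $\rho$, so that its invocation and ack rays must cross an intervening arm --- is a genuine gap, and as stated it points in a direction that does not quite work. At the finite moment at which $p_i$ invokes $O$, the other arms occupy only a bounded region of the plane, so nothing prevents $\mathcal{T}$ from routing $\pi_i$'s transmission ray around them; the conflict only materializes because the other arms must keep growing forever afterwards, at which point it is \emph{their} future growth, not $\pi_i$'s ray construction, that is obstructed. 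Your parenthetical ``or, symmetrically, the continued growth of $\pi_j$ past them'' is the case that actually carries the argument, but your outline neither isolates it nor supplies the monotone invariant (cyclic order on a large circle, complementary regions of arms plus rays) that you correctly identify as necessary to make the separation claim robust against curved, time-dependent arm layouts.

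The paper sidesteps this difficulty with a sharper choice of schedule. It considers an execution in which $O$ has already decided in favor of a value common to $p_1$ and $p_2$, but delivery of that decision to both $p_1$ and $p_2$ is delayed until $p_3$ must decide. Then the decision value is recorded nowhere in the assembly except at the decision location $l$, which lies between $\pi_1$ and $\pi_2$, since the acks toward them have not yet been built. Because $\pi_3$ cannot decide unilaterally without risking a wrong value, it must build a transmission ray and an ack ray to and from $l$; once its ray connects to $l$'s configuration before the acks reach $\pi_1$ or $\pi_2$, the resulting tile path permanently limits the growth of $\pi_1$ or $\pi_2$ (Figure~\ref{figure:blocking}), and since all correctly working processes run forever, this is precisely the increase in crash likelihood forbidden by the Consensus Subroutine Problem. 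In short, the paper does not need your global ``walled-off arm'' lemma: it pins the needed information at a single point whose retrieval forces the blockage, which is both easier to establish and sufficient. To repair your outline you would either have to actually prove your planar separation invariant against an adversarial $\mathcal{T}$, or switch to the paper's schedule-based pinning of the decision information at $l$.
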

\begin{proof}
Let $\mathcal{M}$ be a distributed system with three processes, and otherwise as given in the theorem statement.  Any $\mathcal{T}$ that simulates $\mathcal{M}$ must be able to simulate $\mathcal{M}$ even if any $n-1$ processes of $\mathcal{M}$ crash.  So each tile configuration that simulates a process $p_i \in \mathcal{M}$ must grow independently of other such configurations, else a tile blockage simulating the crash of $p_i$ might also block $p_j$ where $i \neq j$.  Let $\pi_1$, $\pi_2$, $\pi_3$ be the independent tile subconfigurations simulating $p_1$, $p_2$, and $p_3$, respectively.

In order to simulate wait-free 3-consensus, there must be a particular location $l$ on the tiling surface where two processes (WLOG say $p_1$ and $p_2$) agree on a common value.  But then consider an execution segment of the consensus algorithm in $\mathcal{M}$ where $O$ sends a common value to $p_1$ and $p_2$, but neither $p_1$ nor $p_2$ receives that information until $p_3$ has decided on its own final decision value.  (Without loss of generality, we can assume that $l$ is located further from $\pi_3$ than from either $\pi_1$ or $\pi_2$.)  To simulate this in tiling, $\pi_3$ must obtain the information about the correct decision value somehow, and there is no guarantee the information is located anywhere except on the surface somewhere between $\pi_1$ and $\pi_2$, as it has not reached either $\pi_1$ or $\pi_2$ yet.  Configuration $\pi_3$ must either decide on a value without consulting either $\pi_1$ or $\pi_2$, or obtain the information from location $l$.  If $\pi_3$ decides independently, it could decide on the wrong value.  However, to obtain the information from $l$, it must build a transmission ray and an ack ray to and from $l$, which means that the tiles it places forever limit the growth of either $\pi_1$ or $\pi_2$, or both.  (See Figure~\ref{figure:blocking} for a visual representation.)  Any tile assembly system that limits the growth of $\pi_1$, $\pi_2$ or $\pi_3$ cannot solve the Consensus Subroutine Problem for $\mathcal{M}$.
\begin{figure}
\centering
\includegraphics[height=2.75in]{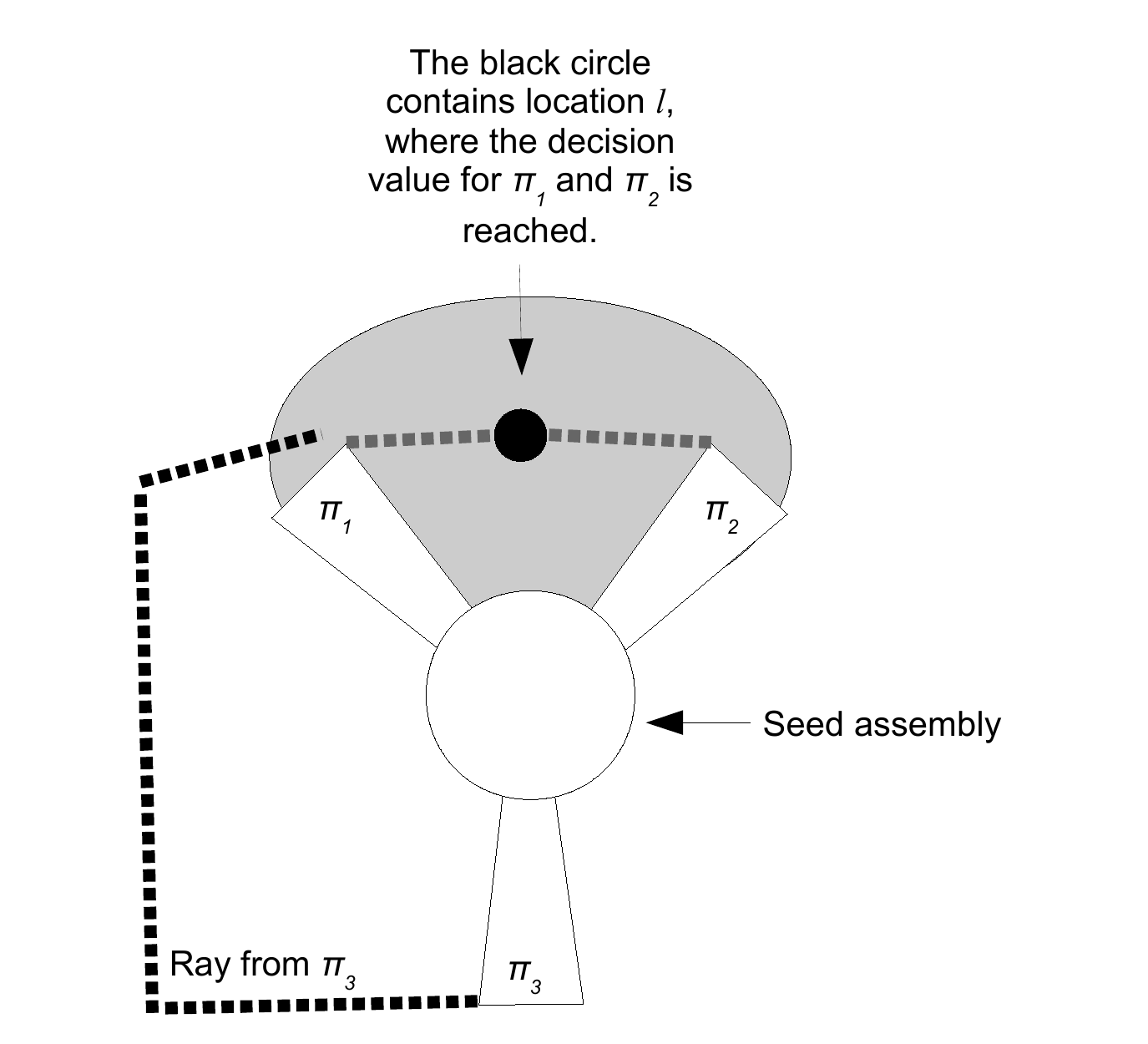}
\caption{For $\pi_1$ and $\pi_2$ to agree on a decision value, they must agree on a value at some location $l$.  If the information at $l$ is tiled outside the blue area before $\pi_1$ and $\pi_2$ incorporate it into their simulation of $p_1$ and $p_2$, it will block further progress of (at least) one of $\pi_1$, $\pi_2$.  So to simulate an execution where $p_1$ and $p_2$ take no further steps until $p_3$ has decided, the decision value must lie inside the blue area.  But if $\pi_3$ sends a ray into the blue area to obtain the decision value, it permanently blocks growth of $\pi_1$ or $\pi_2$ (the figure shows $\pi_3$ blocking $\pi_1$).  However, $\pi_3$ has no other way to make progress if $\pi_1$ and $\pi_2$ agree on a value, and then no more tiles bind to $\pi_1$ or $\pi_2$ until $\pi_3$ has committed to a decision.  (The point is information-theoretic.  If the ray from $\pi_3$ connects to a ray from the red circle before a ray from the red circle connects to $\pi_1$ or $\pi_2$, then $\pi_1$ or $\pi_2$ will be forever unable to send information \emph{through} that tile configuration.)} \label{figure:blocking}
\end{figure}

Given a distributed system with four or more processes, we can just consider the Consensus Subroutine Problem for executions in which all but three processes crash.  We have shown there is no two-dimensional tiling simulation of a solution to the Consensus Subroutine Problem for three processes.  Therefore, there is no such simulation for a system with any number of processes $n \geq 3$.
\end{proof}
To focus on analysis of fault-tolerance of the consensus problem in a distributed system, models of distributed computing tend to require that the consensus object itself be nonfaulty, even though processes may fail.  We will carry over that requirement into our tiling models: we do not permit tile blockages at location $l$ in the construction $\rho$, nor on the ``approach pad'' of the three-dimensional configurations we are about to present.  Therefore, our results show that if all tiles in these configurations bind correctly, they simulate objects capable of solving wait-free $n$-consensus for any finite $n$.

Kao and Ramachandran have proposed a ``natural'' extension of the Winfree-Rothemund Tile Assembly Model from two dimensions to three dimensions~\cite{3D}, in which self-assembling DNA cubes are constructed like folding boxes from two-dimensional DNA tiles.  Figure~\ref{figure:boxfold} illustrates how such a cube could be ``derived'' from planar tiles.  A cubic tile in this model would then be a 4-tuple $T_N \times T_S \times T_W \times T_E$, where each $T_i$ (for $i \in \{N,S,E,W\}$) encodes the glue information needed to ``sew'' the cubic tile together, as in Figure~\ref{figure:boxfold}, and also the glue type on each respective cube face, available to bind to other cubic tiles.  We refer the reader to~\cite{3D} for the formalisms of this model, and for discussion of algorithms and complexity measures in the model.  For our purposes, it is sufficient that there is a (theoretical) way to construct self-assembling cubes whose faces carry glue information, and these cubes bind together following similar rules as in two dimensions: all binding is error-free or a blockage, and bonds are stable if they produce a binding strength of 2 or greater.

\begin{figure}
\centering
\includegraphics[height=3in]{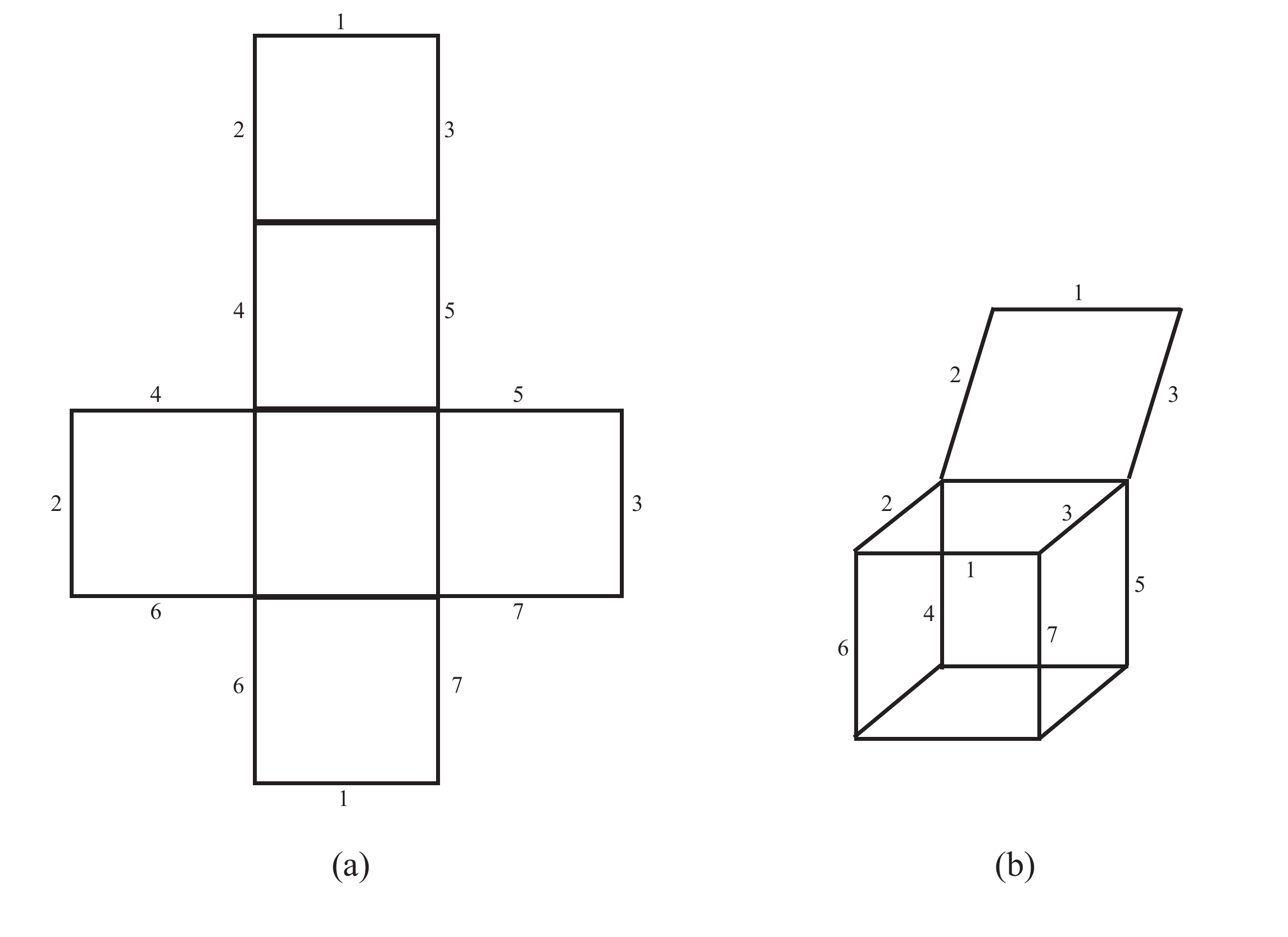}
\caption{Schematic for construction of a three-dimensional cube from two-dimensional tiles.  This figure follows Figure 1 in~\cite{3D}.  Tiles floating in solution combine in a group of six, as in (a), which then folds into a box, as in (b).  Each face of the cube can then bind to other cube faces, because of additional glues encoded into the faces that, for clarity, are not shown in the diagram.} \label{figure:boxfold}
\end{figure}
\begin{theorem} \label{theorem:possibility}
Let $\mathcal{M}$ be a system of $n$ distributed processes ($n \geq 3$) and one $n$-consensus object $O$, such that the processes do not send one another messages, each process invokes $O$ at most once, and all correctly-working processes will run forever.  There is a three-dimensional tile assembly system that simulates solution of the Consensus Subroutine Problem for $\mathcal{M}$.
\end{theorem}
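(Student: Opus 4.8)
The plan is to lift the two-process construction of Theorem~\ref{theorem:2Dsimulation} into three dimensions, using the extra dimension for exactly one purpose: to keep the unbounded growth of the process simulators out of the way of the communication rays that run between them and the consensus-object simulator. That interference is precisely what the two-dimensional impossibility argument of Theorem~\ref{theorem:impossibility} exploits, so the content of the theorem is that one perpendicular layer is enough to route around it.

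First I would generalize Lemma~\ref{lemma:testnset} from a $2$-consensus simulator to an $n$-consensus simulator $\rho$. The idea is unchanged: $\rho$ has a single decision location $l$; the first ray to arrive at $l$, from any of the $n$ process simulators, binds a decision tile carrying some $v \in V$; and from that tile $\rho$ grows ack rays carrying the glue name $Ackv$ out toward all $n$ processes. As in the proof of Lemma~\ref{lemma:testnset}, the only nondeterminism is the choice of tile that binds at $l$, and once that tile is fixed the remaining tileset is locally deterministic; so, invoking the standing stipulation that no blockage occurs at $l$ or on the ``approach pad'' where the $n$ incoming rays converge (the tiling analogue of a nonfaulty consensus object), $\rho$ produces the correct acks regardless of binding order. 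An explicit tileset can be written down by the same bookkeeping as in Figure~\ref{figure:test-and-set-tileset}, now with $n$ outgoing ack directions instead of two.

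The geometry is the heart of the argument, and the step I expect to be the real obstacle. Build the $n$ process simulators $\pi_1,\ldots,\pi_n$ as flat, height-one cubic-tile versions of the wedge machinery of Theorem~\ref{theorem:tilesimulation}, all living in the single plane $z=0$, arranged around a common seed so that they grow in pairwise-disjoint unbounded regions of that plane exactly as one would in two dimensions; place $\rho$ and its approach pad in a region of $z=0$ that no $\pi_i$ ever reaches. When $\pi_i$ invokes $O$ (at most once, by hypothesis) it does not send its ray along $z=0$, which would block a neighboring wedge; instead it lifts the ray into the layer $z=1$, routes it through $z=1$ to a point above $l$, and drops it onto $l$. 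Since these ray tiles occupy no cell of the plane $z=0$, they cannot prevent any $\pi_j$ from placing a tile in an empty $z=0$ cell, so the $n$ wedges keep growing forever no matter what the rays do; the $n$ incoming rays race through $z=1$ to $l$, the losers halting one cell short. Once the decision tile binds, $\rho$ grows its $n$ ack rays in a second free layer $z=2$, routes each back above the corresponding $\pi_i$, and drops it onto $\pi_i$'s inbuffer, where it is absorbed exactly as in the two-process construction (if $\pi_i$ is blocked, simulating a crash, the ack simply finds no site to attach to, which is consistent). What must be checked is that the $2n$ communication paths and the $n$ wedges never contend for a cell and that the ``first ray wins'' race at $l$ is faithful — routine once laid out explicitly, and possible precisely because, unlike in Theorem~\ref{theorem:impossibility}, there is a spare dimension to spend.

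Finally I would define the simulation map $h$ on configurations of $\mathcal{M}$ and verify the six clauses of the definition of ``simulates,'' following the case analysis of the proof of Theorem~\ref{theorem:2Dsimulation} almost verbatim, the cases indexed by how many processes have invoked $O$ and how many $O$ has answered. The one new ingredient is the global observation that, because every ray lives in a layer disjoint from every $\pi_j$, reaching $h(C)$ never forecloses the continued growth of any process simulator; hence no process raises its likelihood of blockage by invoking $O$, so $\mathcal{T}$ solves the \emph{Consensus Subroutine Problem}, not merely consensus. Since each process invokes $O$ at most once, the induction over execution segments is as short as in Theorem~\ref{theorem:2Dsimulation}, and passing from three processes to arbitrary $n \geq 3$ needs nothing further, as the layout already devotes one ack-layer slot per process.
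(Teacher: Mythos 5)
Your overall architecture coincides with the paper's: the $n$ process simulators tile the plane in a ``starfish'' arrangement around a common seed, the consensus simulator is reached through the $z$-direction, transmission rays run in a layer above the process tiles and acks in a layer above those, and the whole point is that communication therefore never occupies cells the wedges need---which is exactly what defeats the two-dimensional argument of Theorem~\ref{theorem:impossibility}. The generalization of Lemma~\ref{lemma:testnset}, the stipulation of no blockages at $l$ or on the approach pad, and the clause-by-clause verification of the simulation map are likewise as in the paper.

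The gap is in the one step you defer as ``routine'': the contention of the $n$ transmission rays near the decision point. Your picture---$n$ rays race through the layer $z=1$ and ``the losers halt one cell short''---cannot be right for $n>4$, since only four cells of that layer are adjacent to the cell above $l$; for larger $n$ the routes must merge or cross well before $l$. Worse, a blockage on a transmission ray is permitted (it simulates the crash of its sender), and a crashed process's partially built ray can then lie across the fixed route of another ray; with routes that are disjoint only until the final cell you cannot exclude executions in which no ray ever reaches $l$, which would violate clause 4 of the simulation definition, since in $\mathcal{M}$ the nonfaulty object $O$ must still answer the surviving processes. The paper resolves precisely this with the approach pad: each $\pi_i$ has its own entry square, an arriving ray greedily ``floods'' the pad (on which blockages are forbidden, the tiling analogue of a nonfaulty object), so some tile from some unblocked $\pi_i$ is guaranteed to bind at the decision point; the ack is then built in the $z$-direction above the pad and travels along the tops and sides of the transmission rays, so even a ray blocked partway still serves as a channel back to its sender. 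Your independent ack layer at $z=2$ could substitute for that channel mechanism for delivery (with some care that acks are delivered only to processes that actually invoked $O$, so the tiling does not simulate a \texttt{return} that $\mathcal{M}$ never performs), but the argument that some ray reaches $l$ at all still needs the flooding-pad idea or an equivalent; as written, that step fails for $n\geq 5$ and in the presence of crashed senders.
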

\begin{proof}
Let $\mathcal{M}$ be a distributed system as in the theorem statement.  We will simulate the execution of $n$ processes in two tiling dimensions by constructing a tile system that assembles in a ``starfish'' pattern, with $n$ arms growing out from a central seed tile.  Each arm simulates one process of $\mathcal{M}$.  We can control the direction of growth---and the rate of increase in width---of each tile configuration that simulates a Turing machine, so such starfish configurations can be tiled, for any $n$.  (Techniques to do this are discussed in~\cite{ccsa}~\cite{patsum}.)
The construction of Figure~\ref{figure:simulation} was successful because the communication between $\pi_1$ and $\rho$, and $\pi_2$ and $\rho$, did not block the growth of either $\pi_1$ or $\pi_2$.  We duplicate that effect by placing the tile configuration that simulates $n$-consensus object $O$ \emph{above} the seed assembly (in the $z$-direction), and requiring that the tiles that simulate processes have glue names on their ``tops'' (their upward faces in the $z$-direction) that create paths for transmission of proposed decision values, and reception of acks of those transmissions.  Figure~\ref{figure:four-consensus-object-2} explicitly shows how to construct a three-dimensional tile configuration that simulates a 4-consensus object.  We could simulate a 3-consensus object by blocking one of the sides of the ``approach pad'' of that construction.  We now argue generally about how to construct such a simulating configuration for any $n \in \mathbb{N}$.

The subassembly in Figure~\ref{figure:four-consensus-object-2} sends a ray due north, and receives an ack due south.  If we want to simulate an $n$-consensus object for $n>4$, we use the following more general tiling method: each of the $n$ processes has an entry square to the approach pad.  Once a transmission ray is incident to the approach pad, its tiles bind in a way that ``floods'' the approach pad---greedily binding to any location possible on the approach pad.  Since we are assuming there can be no tile blockages on the approach pad, we are guaranteed that \emph{some} tile from \emph{some} unblocked $\pi_i$ will bind to the decision point.  Once a tile binds to the decision point, acks are built in the $z$-direction above the decision point, and the correctly-labeled ack radiates along the tops of all the tiles that bound to the approach pad, and follow back the transmission rays to the $\pi_j$'s that sent them.  Because we have the extra $z$-direction to work in, it is ok if the transmission rays on the approach pad block one another: each $\pi_i$ will attempt to enter the approach pad at a unique location, and if its entry (or forward progress on the approach pad) is blocked, its presence nonetheless creates a ``channel'' for the ack information to be transferred, along the top and the sides of the transmission ray.
\end{proof}
\begin{figure}
\centering
\includegraphics[height=5.5in]{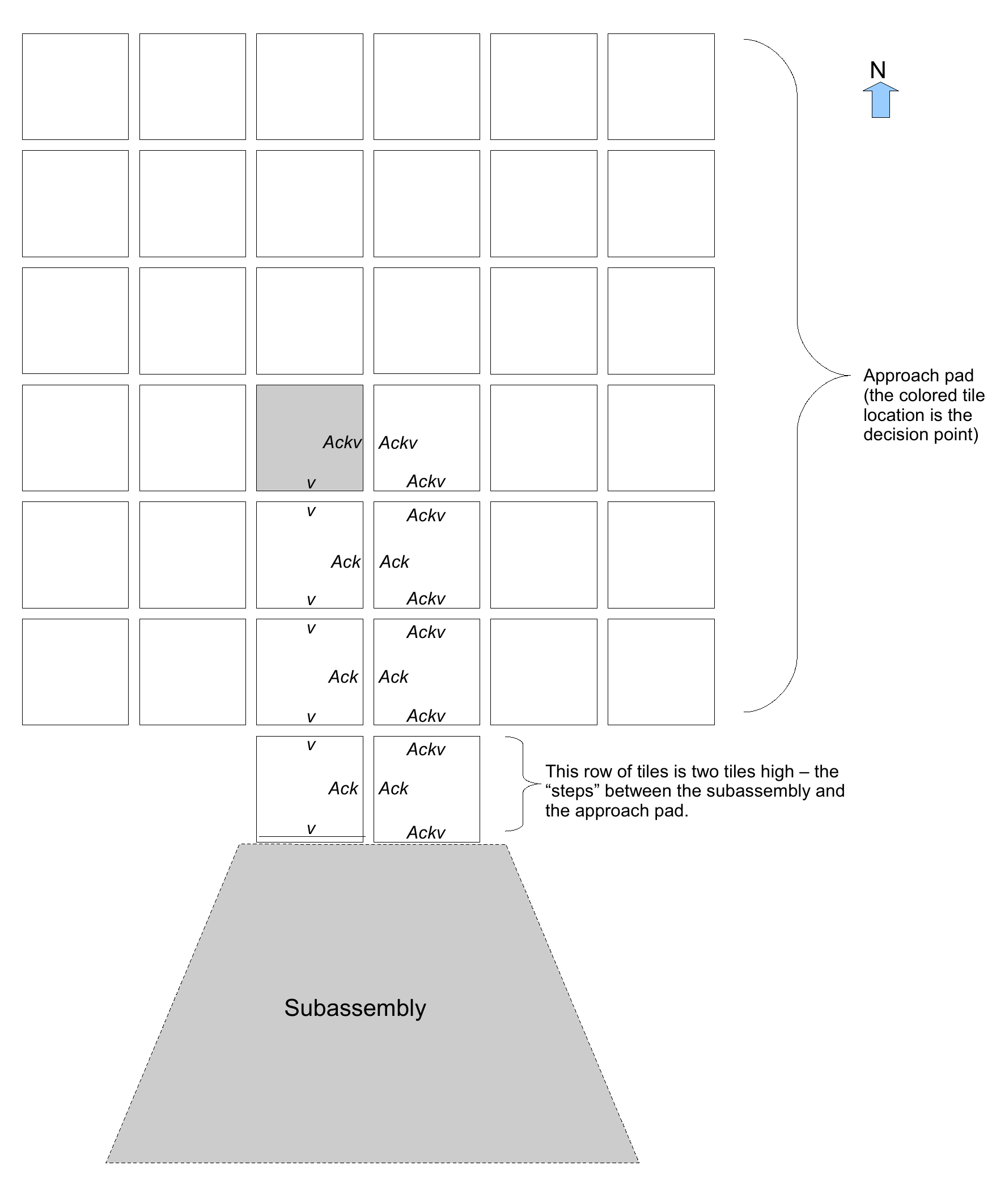}
\caption{Proposed values from subassemblies can approach the decision point from four sides.  Whichever value binds to the decision point is the one chosen and acked to all subassemblies.  This scales up to any finite $n \geq 3$, by making the approach pad large enough to accomodate ``transmission rays'' from $n$ subassemblies; and by sending the acks along the ``tops'' of the ``transmission rays,'' as shown in the side view, Figure~\ref{figure:four-consensus-object}.  (That way, each subassembly will receive an ack, and not be blocked out from lack of space.)  Note this construction is not ``fair,'' in the sense that the decision point is not equidistant from each subassembly, so one subassembly's proposed value may be more likely to be chosen than that of others.  Fortunately, consensus only requires common agreement, not fairness.} \label{figure:four-consensus-object-2}
\end{figure}
\begin{figure}
\centering
\includegraphics[height=5in]{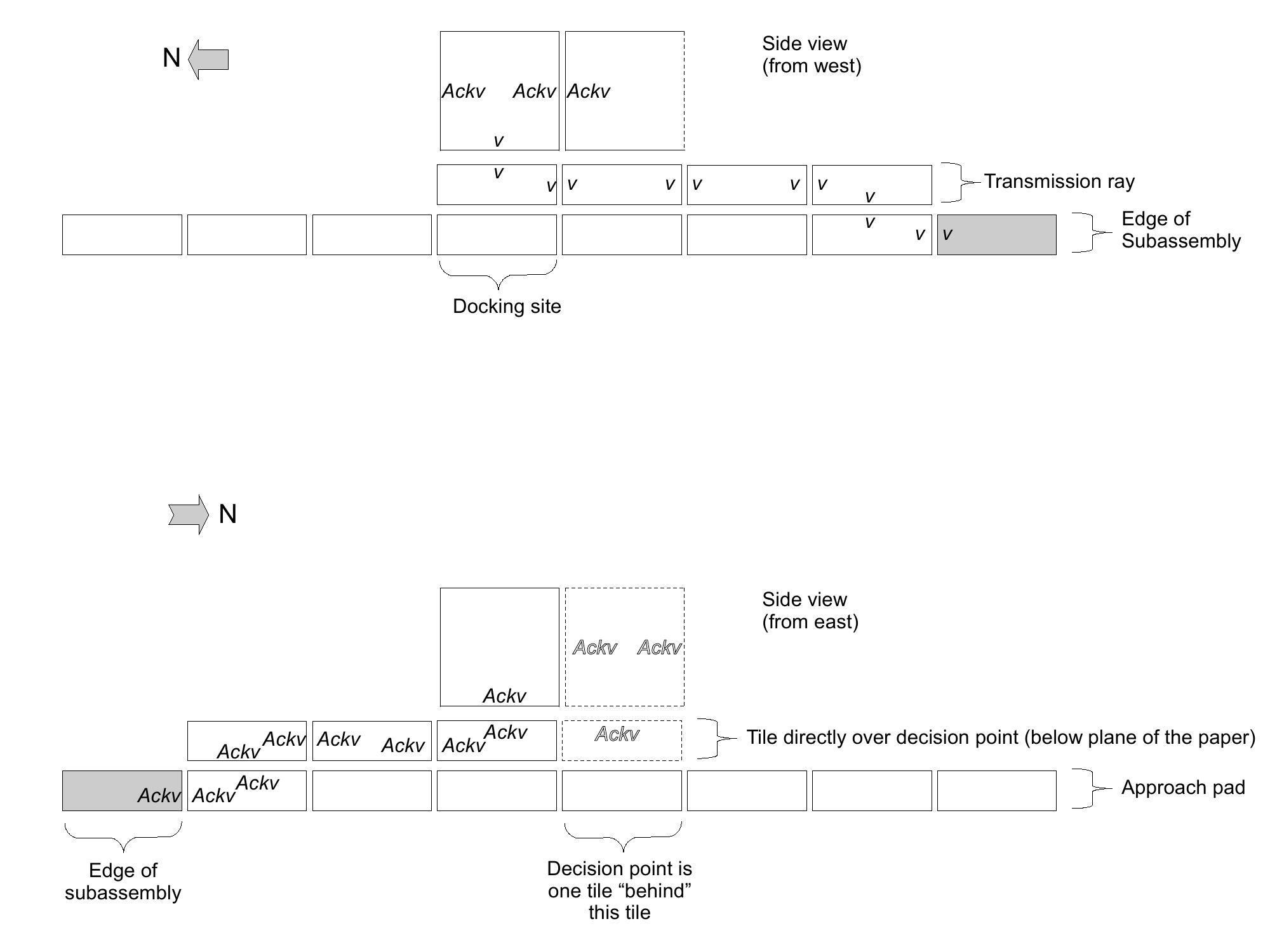}
\caption{Side views of the tile configuration that can simulate an $n$-consensus object for finite $n \geq 3$.  Each subassembly that is simulating a process sends a transmission ray along the approach pad.  The first process that reaches the decision point receives an ack of its own value.  Other transmission rays, once the bind adjacent to a location that has an ``ack cube'' already placed in the $z$-direction, will receive an ack of the already-decided value.  The ack returns to the subassembly along the ``top'' of the transmission ray.} \label{figure:four-consensus-object}
\end{figure}
\section{Conclusion} \label{section:conclusion}
We have shown how two-dimensional tile assembly systems can simulate solution to the consensus problem for some two-process distributed systems, and how three-dimensional tile assembly systems can simulate a strengthening of the consensus problem for some $n$-process distributed systems, for any $n$.  One way to extend our current results would be to consider what types of communication among processes could be simulated by tile assembly.

The error assumption in this paper---that any binding fault causes an entire subassembly to stop growing permanently---is an oversimplification that renders sophisticated error-management techniques unavailable for almost trivial reasons.  One can imagine a simulation of a two-processor system in which the tile constructions that simulate each processor are interleaved somehow, with sufficient redundancy that many binding errors could be recovered, in a way analogous to the behavior of resilient cellular automata~\cite{gacs}.  However, because those assemblies are interleaved, the assumption of blockage failure would cause them both to stop growing, as their frontiers are not independent of each other.  An important area of future work will be the modeling of more realistic failures, and the development of techniques to manage such failures.
\section*{Acknowledgements}
I am grateful to James Aspnes, Jim Lathrop, Jack Lutz and Scott Summers for helpful discussions on earlier versions of this paper.  I am especially grateful to Soma Chaudhuri for many helpful discussions.
\end{document}